
\documentclass[letterpaper,11pt]{article}

\usepackage{pslatex} % better than \usepackage{times}
\usepackage{amsmath}
\usepackage{amsfonts, mathtools}
\usepackage{epsfig}
\usepackage{graphics}
\usepackage[hang]{subfigure}
\usepackage{color}
\usepackage{fancybox}
\usepackage{cite}
\usepackage{url}
\usepackage{xspace}
\usepackage{fullpage}

\newcommand{\comment}[1]{\marginpar{\sf \tiny \raggedright\sloppy {\bf TODO:} #1}}
\newcommand{\punt}[1]{\relax}
\renewcommand{\comment}[1]{\relax}

\newcommand{\poly}{\mbox{\rm {poly}}\xspace}

\newcommand{\fbp}{{\sf FullyBalancedPartition}\xspace}
\newcommand{\rbp}{{\sf RelaxBalancedPartition}\xspace}

\newcommand{\prob}[1]       {\Pr\left\{ #1 \right\}}

\newcommand{\set}[1]        {\left\{ #1 \right\}}

\newtheorem{theorem}{Theorem}
\newtheorem{corollary}[theorem]{Corollary}
\newtheorem{lemma}[theorem]{Lemma}

\newcommand{\qed}{\rule{7pt}{7pt}}

\newenvironment{proof}{\noindent{\bf Proof}\hspace*{1em}}{\qed\bigskip}
\newenvironment{proof-sketch}{\noindent{\bf Sketch of Proof}\hspace*{1em}}{\qed\bigskip}
\newenvironment{proof-idea}{\noindent{\bf Proof Idea}\hspace*{1em}}{\qed\bigskip}
\newenvironment{proof-of-lemma}[1]{\noindent{\bf Proof of Lemma #1}\hspace*{1em}}{\qed\bigskip}
\newenvironment{proof-of-theorem}[1]{\noindent{\bf Proof of Theorem #1}\hspace*{1em}}{\qed\bigskip}
% \newenvironment{proofof}[1]{\noindent{\bf Proof}
% of #1:\hspace*{1em}}{\qed\bigskip}

% \makeatletter
% \@addtoreset{figure}{section}
% \@addtoreset{table}{section}
% \@addtoreset{equation}{section}
% \makeatother

% \renewcommand{\thefigure}{\thesection.\arabic{figure}}
% \renewcommand{\thetable}{\thesection.\arabic{table}}
% \renewcommand{\theequation}{\thesection.\arabic{equation}}

%% Code environment

\newif\ifcodeinbox

%%%%%%%%%%%%%%%%%%%%%%%%%%%%%%%%%%%%%%%%%%%%%%
%% The Code environment %%%
%% The following commands are so we can automatically number and symbolically reference
%% lines.
%% The ``codelinenumber'' counter counts the current line number
\newcounter{codelinenumber}
%% The \zeroli command zeros the line number counter
\newcommand{\zeroli}{\setcounter{codelinenumber}0}
%% The \li command bumps the counter, outputs it, and skips some space
%% It also GLOBALLY sets the value of \@currentlabel, so that the \label
%% commands on the same line work OK.  The problem otherwise is that Latex
%% uses groups on each field of a line in tabbing mode, and this causes
%% \@currentlabel to get reset as soon as a field ends (e.g. at the \').
%% Now, the label can appear after the \\, as in
%%  \li \>\>$x\gets y$ \\ \label{li:assign-x}
%%

\def\@startline{\global\@curtabmar\@nxttabmar\relax
   \global\@curtab\@curtabmar\setbox\@curline\hbox
    {}\@startfield\global\lifalse\strut}

\newenvironment{code}{\global\codeinboxtrue%
\setbox\strutbox\hbox{\vrule height 9pt depth 4pt width0pt}%
\noindent\begin{tabbing}%
\zeroli\setlength{\tabbingsep}{1em}
\hspace*{1em}\=999\ifdoubledigit9\fi
\=\ {\bf if} \={\bf then} \={\bf if} \={\bf then}
    \={\bf if} \={\bf then} \={\bf if} \={\bf then} \={\bf if} \={\bf then}
    \={\bf if} \=\+\+\kill}{\end{tabbing}\global\codeinboxfalse}

\newif\ifdoubledigit
\newcommand{\codebox}[1]{\setbox0=\vbox{\begin{code}#1\end{code}}
\ifnum\c@codelinenumber>9\global\doubledigittrue\else\doubledigitfalse\fi
\vskip1sp\noindent\hskip-14pt%%
\parbox{\textwidth}{\begin{code}\protect#1\end{code}}\global\let\@currentlabel=\thechapter}

\newif\ifli
\newcommand{\li}{\global\litrue\stepcounter{codelinenumber}%
\ifdoubledigit%
\hbox to8pt{\hss\thecodelinenumber\hskip5pt}\else%
\hbox to8pt{\hskip-1pt\thecodelinenumber\hss}\fi%
\xdef\@currentlabel{\p@codelinenumber\thecodelinenumber}\'}

    %This should be flushed

 % optionally followed by string argument

%% Section and figure placement

\newcommand{\secput}[2]{\section{#2}\label{sec:#1}}

\def\Reals#1{\mathbb{R}^{#1}}

\def\sizeof#1{\left|#1  \right|}

\newcommand{\secref}[1]     {Section~\ref{sec:#1}}

\newcommand{\thmlabel}[1]   {\label{thm:#1}}
\newcommand{\thmref}[1]     {Theorem~\ref{thm:#1}}
\newcommand{\lemlabel}[1]   {\label{lem:#1}}
\newcommand{\lemref}[1]     {Lemma~\ref{lem:#1}}
\newcommand{\figref}[1]     {Figure~\ref{fig:#1}}
\newcommand{\figlabel}[1]   {\label{fig:#1}}

\newcommand{\eqlabel}[1]     {\label{eq:#1}}
\renewcommand{\eqref}[1]      {(\ref{eq:#1})}

% Hyperfloors and hyperceilings
\def\hyperspc{\kern -0.22em}

\newenvironment{fminipage}%
  {\begin{Sbox}\begin{minipage}}%
  {\end{minipage}\end{Sbox}\fbox{\TheSbox}}

\newenvironment{kbalgbox}[0]{\vskip 0.0in
\noindent
\begin{fminipage}{0.99\linewidth}
}{
\end{fminipage}
\vskip 0.1in
}

%%%%%%%%%%%%%%%%%%%%
% New commands and envs from llncs
%%%%%%%%%%%%%%%%%%%%
\newcounter{@inst}
\newcounter{@auth}

\newdimen\instindent
\newbox\authrun
\newtoks\authorrunning
\newtoks\tocauthor
\newbox\titrun
\newtoks\titlerunning
\newtoks\toctitle

\def\institute#1{\gdef\@institute{#1}}
\def\inst#1{\unskip$^{}$}
%   END OF FILE prestuff.tex

\newcommand{\defn}[1]           {{\textit{\textbf{\boldmath #1}}}}

\newcommand{\bigO}[1]{O({#1})}

\newcommand{\crossing}{\mbox{\sf crossing}\xspace}
\newcommand{\inner}{\mbox{\sf inner}\xspace}
\newcommand{\outgoing}{\mbox{\sf outgoing}\xspace}
\renewcommand{\outer}{\mbox{\sf outer}\xspace}

\newcommand{\graphsize}[1]{|#1|}
\newcommand{\ang}[1]        {\ifmmode{\left\langle #1 \right\rangle}
                 \else{$\left\langle${#1}$\right\rangle$}\fi}
                        % the \if allows use outside mathmode,
                    % but will swallow following space there!

\newcommand{\tree}{T}
\newcommand{\fbt}{BT}

\newcommand{\flb}{relax-balanced\xspace}

\newcommand{\FLB}{Relax-Balanced\xspace}

\newcommand{\rb}{\flb}

\newcommand{\RB}{\FLB}

\begin{document}
%tell dvips that we want U.S. letter size

\title{Optimal Cache-Oblivious Mesh Layouts}

\author{Michael A. Bender
    \thanks{\sloppy Department of Computer Science, Stony Brook University,
            Stony Brook, NY 11794-4400, USA.
            Email:~\texttt{bender@cs.sunysb.edu}.}~
    \thanks{Tokutek, Inc. \texttt{http://www.tokutek.com}.}~
    \thanks{\sloppy Supported in part by NSF Grants CCF~0621439/0621425,
            CCF~0540897/05414009, CCF 0634793/0632838, CNS 0627645, and
            CCF~0937822 and by 
            DOE Grant DE-FG02-08ER25853.}
\and Bradley~C.~Kuszmaul
    \thanks{\sloppy Computer Science and Artificial Intelligence Laboratory,
            Massachusetts Institute of Technology,
            Cambridge, MA 02139, USA.
            Email:~\texttt{bradley@mit.edu}}~
            \footnotemark[2]~
    \thanks{\sloppy Supported in part by the
     Singapore-MIT Alliance, NSF Grant ACI-0324974, and DOE Grant DE-FG02-08ER25853.}
\and
  Shang-Hua Teng
    \thanks{\sloppy Computer Science Department,
  University of Southern California,
  941 Bloom Walk,
  Los Angeles, CA 90089-0781, USA. 
            Email:~\texttt{shanghua@usc.edu.}}~
    \thanks{Akamai Technologies, Inc.}~
    \thanks{\sloppy Supported in part by NSF grants
            CCR-0311430 and ITR CCR-0325630.}
\and Kebin Wang
	    \thanks{\sloppy Computer Science Department,
            Boston University, Boston, MA 02215, USA.
            Email:~\texttt{kwang@cs.bu.edu}.}
}

%\date{}

\maketitle

%\input{abstract}
%!TEX root =  co-fem-nbody.tex

\begin{abstract}

  A \defn{mesh} is a graph that divides physical space into
  regularly-shaped regions.  Meshes computations form the basis of many
  applications, including finite-element methods, image rendering,
  collision detection, and N-body simulations.  In one important mesh
  primitive, called a \defn{mesh update}, each mesh vertex stores a
  value and repeatedly updates this value based on the values stored in all
  neighboring vertices.  The performance of a mesh update depends on the
  layout of the mesh in memory. Informally, if the mesh layout has
  good data locality (most edges connect a pair of nodes that are stored
  near each other in memory), then a mesh update runs quickly.

  This paper shows how to find a memory layout that guarantees that the
  mesh update has asymptotically optimal memory performance for any
  set of memory parameters. Specifically, the cost of the mesh update
  is roughly the cost of a sequential memory scan. Such a memory
  layout is called \defn{cache-oblivious}. Formally, for a
  $d$-dimensional mesh $G$, block size $B$, and cache size $M$ (where
  $M=\Omega(B^d)$), the mesh update of $G$ uses $O(1+\sizeof{G}/B)$
  memory transfers.  The paper also shows how the mesh-update performance
  degrades for smaller caches, where $M=o(B^d)$.

  The paper then gives two algorithms for finding cache-oblivious mesh
  layouts.  The first layout algorithm runs in time
  $O(\sizeof{G}\log^2\sizeof{G})$ both in expectation and with high
  probability on a RAM. 
  It uses $O(1+\sizeof{G}\log^2(\sizeof{G}/M)/B)$ memory transfers in expectation
  and $O(1+(\sizeof{G}/B)(\log^2(\sizeof{G}/M) + \log\sizeof{G}))$ memory transfers with high
  probability in the cache-oblivious and disk-access machine (DAM)
  models. The layout is obtained by finding a fully balanced
  decomposition tree of $G$ and then performing an in-order traversal of
  the leaves of the tree.

  The second algorithm computes a cache-oblivious layout on a RAM in time
  $O(\sizeof{G}\log\sizeof{G}\log\log\sizeof{G})$ 
  both in expectation and with high probability.
  In the DAM and cache-oblivious models, the second layout algorithm uses
  $\bigO{1+(\graphsize{G}/B)\log{(\graphsize{G}/M)}\min\{\log\log{\sizeof{G}}, \log(\sizeof{G}/M)\}}$
  memory transfers in expectation and 
  $\bigO{1+(\graphsize{G}/B)(\log{(\graphsize{G}/M)} \min\{\log\log{\sizeof{G}}, \log(\sizeof{G}/M)\}+ \log{\sizeof{G}})}$
  memory transfers with high probability.
  The
  algorithm is based on a new type of decomposition tree, here called
  a \defn{\flb decomposition tree}. Again, the layout is obtained by
  performing an in-order traversal of the leaves of the decomposition
  tree.

\end{abstract}
%   END OF FILE abstract.tex

%\thispagestyle{empty}
%\newpage
%\setcounter{page}{1}

% \newpage

%\input{intro}
%!TEX root =  co-fem-nbody.tex

\secput{intro}{Introduction}

A \defn{mesh} is a graph that represents a division of physical space
into regions, called \defn{simplices}. Simplices are typically
triangular (in 2D) or tetrahedral (in 3D). They are \defn{well shaped},
which informally means that they cannot be long and skinny, but must
be roughly the same size in any direction.  Meshes form the basis of
many computations such as finite-element methods, image rendering,
collision detection, and N-body simulations. Constant-dimension meshes
have nodes of constant-degree.

In one important mesh primitive, each mesh vertex
stores a value and repeatedly updates this value based on the values stored
in all neighboring vertices. Thus, we view the mesh as a weighted graph
$G=(V,E,w,e)$ ($w:V\rightarrow\Reals{}$, $e:E\rightarrow\Reals{+}$).
For each vertex $i\in V$, we repeatedly recompute its weight $w_i$ as
follows:
\[ w_i = \sum_{(i,j)\in E} w_j \, e_{ij}\,. \]
We call this primitive
a \defn{mesh update}. Expressed differently, a mesh update is the
sparse matrix-vector multiplication, where the matrix is the
(weighted) adjacency matrix of $G$, and vectors are the vertex weights.

On a \defn{random access machine (RAM)} (a flat memory model), a mesh
update runs in linear time, regardless of how the data is laid out in
memory.  In contrast, on a modern computer with a hierarchical memory,
how the mesh is laid out in memory can
affect the speed of the computation substantially.  This paper studies the
\defn{mesh layout problem}, which is how to lay out a mesh in memory, so
that mesh updates run rapidly on a hierarchical memory.

We analyze the mesh layout problem in the \defn{disk-access
machine (DAM) model}~\cite{Aggarwal-Vitter-1988} (also known as the
\defn{I/O-model}) and in the \defn{cache-oblivious (CO)
model}~\cite{FrigoLePrRa99}. The DAM model is an idealized
two-level memory hierarchy. These two levels could represent L2
cache and main memory, main memory and disk, or any other pair of
levels. The small level (herein called \defn{cache}) has size $M$,
and the large level (herein called \defn{disk}) has unbounded
size. Data is transferred between the two levels in blocks of size
$B$; we call these \defn{memory transfers}. Thus, a  memory
transfer is a cache-miss if the DAM represents L2 cache and main
memory and is a page fault, if the DAM represents main memory and
disk.

A memory transfer has unit cost. The objective is to minimize the
number of memory transfers.  Focusing on memory transfers, to the
exclusion of other computation, frequently provides a good model
of the running time of an algorithm on a modern computer.  The
\defn{cache-oblivious model} is essentially the DAM model, except that
the values of $B$ and $M$ are unknown to the algorithm or the
coder.  The main idea of cache-obliviousness is this: If an
algorithm performs an asymptotically optimal number of memory
transfers on a DAM, but the algorithm is not parameterized
by $B$ and $M$, then the algorithm also performs an asymptotically
optimal number of memory transfers on an arbitrary unknown,
multilevel memory hierarchy.

The cost of a mesh update in the DAM and
cache-oblivious models depends on how the mesh is laid out in
memory. An update to a mesh $G = (V,E)$ is just a graph traversal.
If we store $G$'s vertices arbitrarily in memory, then the update
could cost as much as $\bigO{|V|+|E|} = \bigO{\sizeof{G}}$ memory
transfers, one transfer for each vertex and each edge.  In this paper
we achieve only $\Theta\left(1+|G|/B\right)$ memory transfers.
This is the cost of a sequential scan of a chunk of memory
of size $O(\sizeof{G})$, which is asymptotically optimal.

Our mesh layout algorithms extend earlier ideas from VLSI theory.
Classical VLSI-layout algorithms turn out to have direct
application in scientific and I/O-efficient computing. Although
these diverse areas may appear unrelated, there are important
parallels.  For example, in a good mesh layout, vertices are
stored in (one-dimensional) \emph{memory locations} so that most
mesh edges are short; in a good VLSI layout, graph vertices are
assigned to (two-dimensional) \emph{chip locations} so that most
edges are short (to cover minimal area).

\subsection*{Results}
We give two algorithms for laying out a constant-dimension well-shaped
mesh $G=(V,E)$ so that updates run in $\Theta(1+\sizeof{G}/B)$ memory
transfers, which is $\Theta(1+\sizeof{V}/B)$ since the mesh has
constant degree.

\sloppy Our first layout algorithm runs in time
$O(\sizeof{G}\log^2 \sizeof{G})$ on a RAM both in expectation and
with high probability.\footnote{For input size $N$ and event $E$,
we say that $E$ occurs \defn{with high probability} if for any
constant $c>0$ there exists a proper choice of constants defining
the event such that $\prob{E}\geq 1-N^{-c}$.}
%% MAB: need normal space after the footnote!
In the DAM and cache-oblivious models,
the algorithm uses
$\bigO{1+(\graphsize{G}/B)\log^2{(\graphsize{G}/M)}}$ memory transfers
in expectation and
$\bigO{1+(\graphsize{G}/B) (\log^2{(\graphsize{G}/M)} +\log{\graphsize{G}}) }$
memory transfers with high probability.
The layout algorithm is based on decomposition
trees and fully balanced decomposition
trees~\cite{Leighton82,BhattLe84}; specifically, our mesh layout is
obtained by performing an in-order traversal of the leaves of a
fully-balanced decomposition tree.  Decomposition trees were developed
several decades ago as a framework for VLSI 
layout~\cite{Leighton82,BhattLe84}, but they are well suited for mesh
layout. However, the original algorithm for building fully-balanced
decomposition trees is too slow for our uses (it appears to run in
time $\bigO{\sizeof{G}^{\Theta(b)}}$, where $b$ is the degree bound of
the mesh). Here we develop a new algorithm that is faster and simpler.

\sloppy Our second layout algorithm, this paper's main result,
runs in time $O(\sizeof{G}\log \sizeof{G}\log\log \sizeof{G})$ on
a RAM both in expectation and with high probability. In the DAM
and cache-oblivious models, the algorithm uses
$O(1+(\sizeof{G}/B)\log(\sizeof{G}/M) \min\{\log\log{\sizeof{G}},
  \log(\sizeof{G}/M)\}
)$ memory transfers in expectation and
$O(1+(\sizeof{G}/B)(\log(\sizeof{G}/M)\min\{\log\log{\sizeof{G}},
  \log(\sizeof{G}/M)\}+\log\sizeof{G}))$
memory transfers with high probability.

The algorithm is based on a new type of
decomposition tree, which we call a \defn{\flb decomposition tree}.
As before, our mesh layout is obtained by performing an in-order
traversal of the leaves of a \flb decomposition tree. By carefully
relaxing the requirements of decomposition trees, we can retain
asymptotically optimal mesh updates, while improving construction by
nearly a logarithmic factor.

The mesh-update guarantees require a \defn{tall-cache assumption} on
the memory system that $M=\Omega(B^d)$, where $d$ is the dimension of
the mesh. We also show how the performance degrades for small caches,
where $M=o(B^d)$. If the cache only has size $\bigO{B^{d-\epsilon}}$,
then the number of memory transfers increases to
$\bigO{1+\sizeof{G}/B^{1-\epsilon/d}}$.

In addition to the main results listed above, this paper has
contributions extending beyond I/O-efficient computing.  First,
our algorithms for building fully-balanced decomposition trees are
faster and simpler than previously known algorithms.  Second, our
\flb decomposition trees may permit some existing algorithms based
on decomposition trees to run more quickly.  Third, the techniques
in this paper yield simpler and improved methods for generating
$k$-way partitions of meshes, earlier shown
in~\cite{KiwiSpielmanTeng}.  More generally, we cross-pollinate
several fields, including I/O-efficient computing, VLSI layout,
and scientific computing.

% LocalWords:  simplices vertices multipole Spielman Teng nonuniformly TLB FEMs
% LocalWords:  tetrahedra discretizing
%   END OF FILE intro.tex

%\input{balanced}
%!TEX root =  co-fem-nbody.tex

%%%%%%%%%%%%%%%%%%%%%%%
%%     SECTION 2     %%
%%%%%%%%%%%%%%%%%%%%%%%

\secput{balanced}{Geometric Separators and Decomposition Trees}

In this section we review the geometric-separator
theorem~\cite{MillerTengThurstonVavasis}, which we use for
partitioning constant-dimensional meshes. We then review decomposition
trees~\cite{Leighton82}. Finally, we show how to use geometric
separators to build decomposition trees for well shaped meshes.

\subsection*{Geometric Separators}

A finite-element mesh is a decomposition of a geometric domain into a
collection of interior-disjoint \defn{simplices} (e.g., triangles in
2D and tetrahedra in 3D), so that two simplices can only intersect at
a lower dimensional simplex. Each simplicial element of the mesh must
be \defn{well shaped}.  Well shaped means that there is a constant
upper bound to the aspect ratio, that is, the ratio of the radius of
the smallest ball containing the element to the radius of the largest
ball contained in the element~\cite{TengWong}.

A \defn{partition} of a graph $G=(V,E)$ is a division of $G$ into
disjoint subgraphs $G_0=(V_0,E_0)$ and $G_1=(V_1,E_1)$ such that
$V_0\cap V_1=\emptyset$, and $V_0\cup V_1 =V$.  $G_0$ and $G_1$ is a
\defn{$\beta$-partition} of $G$ if they are a partition of $G$ and
$\sizeof{V_0}, \sizeof{V_1} \leq \beta \sizeof{V}$.  We let
$E(G_0,G_1)$ denote the set of edges in $G$ crossing from $V_0$ to
$V_1$, and $E(v,G_1)$ denote the set of edges in $G$ connecting vertex
$v$ to the vertices of $G_1$.  For a function $f$, $G=(V,E)$ has a 
\defn{family of $(f,\beta)$-partitions} if for each subset $S \subseteq
V$ and induced graph $G_S =(V_S,E_S) $, 
graph $G_S$ has a $\beta$-partition of
$G_{S_0}=(V_{S_0},E_{S_0})$ and $G_{S_1}=(V_{S_1},E_{S_1})$
such that $ \sizeof{E_S - E_{S_0} -  E_{S_1}} \leq f(\sizeof{V_S})$.

The following separator theorem of Miller, Teng, Thurston, and
Vavasis~\cite{MillerTengThurstonVavasis} shows that
meshes can be partitioned efficiently:

\begin{theorem}[Geometric Separators~\cite{MillerTengThurstonVavasis}]%
Let $G=(V,E)$ be a well shaped finite-element mesh in $d$ dimensions
($d>1$). For constants $\epsilon$ ($0<\epsilon<1$) and $c(\epsilon,
d)$ depending only on $\epsilon$ and $d$, a
$(f(N)=O(N^{1-1/d}),(d+1+\epsilon)/(d+2))$-partition of $G$ can be
computed in $O(d\sizeof{G} +c(\epsilon, d) )$ time with probability
at least $1/2$.
\thmlabel{GeoSep}
\end{theorem}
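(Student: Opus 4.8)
The plan is to follow the stereographic-projection method: lift the mesh vertices to the unit sphere $S^d \subset \Reals{d+1}$, recenter them with a conformal map so that a random cut is automatically balanced, cut with a random great circle, and pull the cut back to $\Reals{d}$ as a separating $(d-1)$-sphere. Well-shapedness enters only through one fact: the inscribed balls of a well-shaped mesh form a \emph{neighborhood system of constant ply}, i.e.\ there is a constant $k=\bigO{1}$ such that no point of $\Reals{d}$ lies in more than $k$ of the balls. This ply bound is what will control the number of crossing edges.

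First I would attach to each vertex $v\in V$ a ball $B_v$ (the inscribed ball of the simplices incident to $v$) and record that bounded aspect ratio gives constant ply. Applying the stereographic projection $\Pi:\Reals{d}\to S^d$ yields $N=\sizeof{V}$ points on the sphere in one higher dimension. Next I would compute an \emph{approximate centerpoint} $p$ of these points: a point such that every open halfspace avoiding it contains at most a $(d+1+\epsilon)/(d+2)$ fraction of them. An exact centerpoint in $\Reals{d+1}$ exists (every open halfspace avoiding it has at most $(d+1)/(d+2)$ of the mass), but is expensive; a random sample whose size depends only on $\epsilon$ and $d$ produces such a $p$ in $\bigO{c(\epsilon,d)}$ time with constant probability. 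This sample is the source both of the $\epsilon$ slack in the balance ratio and of the additive $c(\epsilon,d)$ term in the running time. I would then rotate $p$ onto the polar axis and apply a conformal dilation of $S^d$ carrying $p$ to the origin; conditioned on $p$ being a valid approximate centerpoint, the pushed-forward points are centered, so \emph{every} hyperplane through the origin splits them into two open halfspaces each holding at most a $(d+1+\epsilon)/(d+2)$ fraction of the vertices.

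I would then choose the cutting hyperplane through the origin uniformly at random and pull its great circle back through $\Pi^{-1}$ to a sphere $S$ in $\Reals{d}$; the vertices inside and outside $S$ form $V_0$ and $V_1$. The balance $\sizeof{V_0},\sizeof{V_1}\le (d+1+\epsilon)/(d+2)\,N$ is immediate from the centering step and holds for every choice of cut. A crossing edge can only be incident to a vertex $v$ whose mapped ball the sphere passes through, so it remains to bound the expected number of such boundary balls.

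The main obstacle is exactly this estimate. Writing $r_i$ for the angular radius of the image of $B_i$ on $S^d$, a uniformly random great circle passes through ball $i$ with probability $\bigO{r_i}$, so the expected number of boundary balls is $\bigO{\sum_i r_i}$. The constant-ply property bounds the total covered area, giving $\sum_i r_i^d = \bigO{k}=\bigO{1}$; a power-mean (H\"older) inequality then yields
\[ \sum_i r_i \;\le\; N^{1-1/d}\Bigl(\sum_i r_i^d\Bigr)^{1/d} \;=\; \bigO{N^{1-1/d}}. \]
The delicate point is that the conformal map distorts the balls, so I must check that it preserves bounded ply (M\"obius maps send spherical caps to spherical caps bijectively, so ply is invariant) and that the area bound survives the distortion; this is the technical heart of the argument. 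Because the mesh has constant degree, the expected number of crossing edges is likewise $\bigO{N^{1-1/d}}=f(N)$. Finally, conditioning on the constant-probability success of the centerpoint sample (which fixes the balance deterministically) and applying Markov's inequality to the separator size, both the balance and the $\bigO{N^{1-1/d}}$ bound hold simultaneously with probability at least $1/2$; the projection, the constant-size sampling, and the $\bigO{d}$-per-vertex classification against $S$ give the $\bigO{d\sizeof{G}+c(\epsilon,d)}$ running time.
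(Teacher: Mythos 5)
Your proposal is correct and follows essentially the same route the paper takes: the theorem is quoted from Miller, Teng, Thurston, and Vavasis, and the paper's own description of the algorithm (stereographic projection, constant-size sampling for an approximate centerpoint, conformal recentering, random great circle, pull-back, and linear-time classification) is exactly the procedure you analyze. Your filled-in analysis --- constant ply from well-shapedness, invariance of ply under the M\"obius map, the $\sum_i r_i \le N^{1-1/d}(\sum_i r_i^d)^{1/d}$ bound, and Markov's inequality for the probability-$1/2$ guarantee --- is the standard argument from that cited source, so there is nothing genuinely different to compare.
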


The separator algorithm from~\cite{MillerTengThurstonVavasis} works as
follows.  First, project the coordinates of the vertices of the input
graph $G$ onto the surface of a unit sphere in $(d+1)$-dimensions.
The projection of each point is independent of all other input points
and takes constant time.  Sample a constant number of points from all
projected points uniformly at random.  Compute a centerpoint of the
sampled points.  (A \defn{centerpoint} of a point set in
$d$-dimensions is a point such that every hyperplane through the
centerpoint divides the point set approximately evenly, i.e., in the
ratio of $d$ to $1$ or better.)  Rotate and then dilate the sampled
points.  Both the rotation and dilation are functions of the
centerpoint and the dimension $d$.  Choose a random great circle on
this unit sphere. (A \defn{great circle} of a sphere is a circle on
the sphere's surface that evenly splits the sphere.)  Map the great
circle back to a sphere in the $d$-dimensional space by reversing
the dilation, the rotation, and the projection.  Now use this new
sphere to divide the vertices and the edges of the input graph.

Now more mechanics of the algorithm.  Mesh $G$ is stored in an
array. Each vertex of $G$ is stored with its index (i.e., name), its
coordinates, and all of its adjacent edges, including the index and
coordinates of all neighboring vertices.  (This mesh representation
means that each edge is stored twice, once for each of the edge's two
vertices.)

To run the algorithm,
scan the vertices and edges in $G$ after obtaining the sphere
separator. During the scan, divide the
vertices into two sets, $G_0$, containing the vertices inside the new
sphere and $G_1$, containing the vertices outside the sphere.  Mark an
edge as ``crossing'' if the edge crosses from $G_{0}$ to $G_{1}$.
Verify that the number of crossing edges, $\sizeof{E(G_0,G_1)}$, is
$O(\sizeof{G}^{1-1/d})$, and if not, repeat. The cost of this
scan is $O(\sizeof{G}/B+1)$ memory transfers.

The geometric separator algorithm has the following performance:
\begin{corollary}%
\label{cor:GeoSep} Let $G=(V,E)$ be a well shaped finite-element
mesh in $d$ dimensions ($d>1$).
For constants $\epsilon$ ($0<\epsilon<1$) and $c(\epsilon,
d)$ depending only on $\epsilon$ and $d$,
the geometric-separator algorithm
finds an $(f(N)=O(N^{1-1/d}),(d+1+\epsilon)/(d+2))$-partition of $G$. The
algorithm runs in $O(\graphsize{G})$ on a RAM and uses
$O(1+\graphsize{G}/B)$ memory transfers in the DAM and
cache-oblivious models, both in expectation and with probability at
least $1/2$. With high probability, the geometric-separator
algorithm completes in $O(\graphsize{G}\log\graphsize{G})$ on a RAM
and uses $O(1+\graphsize{G}\log\graphsize{G}/B)$ memory transfers in
the DAM and cache-oblivious models.
\end{corollary}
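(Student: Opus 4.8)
The plan is to obtain this corollary from \thmref{GeoSep} by charging the cost of a single invocation of the separator routine and then bounding how many times that routine is repeated before it succeeds. Since the mesh is fixed-dimensional, $d=\bigO{1}$ and $c(\epsilon,d)=\bigO{1}$, so the per-trial bound $\bigO{d\graphsize{G}+c(\epsilon,d)}$ from \thmref{GeoSep} is just $\bigO{\graphsize{G}}$. Concretely, one trial consists of a constant number of passes over the array holding $G$: a scan to project the vertices onto the unit sphere in $(d+1)$ dimensions, a scan to assign each vertex to $G_0$ or $G_1$ according to the chosen separating sphere, and a scan to count the crossing edges $\sizeof{E(G_0,G_1)}$ and check that this count is $\bigO{N^{1-1/d}}$. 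The remaining work --- drawing a constant-size random sample, computing its centerpoint, and performing the rotation, dilation, and random great-circle selection --- touches only $\bigO{1}$ points and so costs $\bigO{1}$ time and transfers. Thus one trial runs in $\bigO{\graphsize{G}}$ time on a RAM and, since each pass is a sequential scan of an array of size $\bigO{\graphsize{G}}$, uses $\bigO{1+\graphsize{G}/B}$ memory transfers in the DAM and cache-oblivious models.

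Next I would bound the number of trials. By \thmref{GeoSep} each trial independently yields a valid $(f(N)=\bigO{N^{1-1/d}},(d+1+\epsilon)/(d+2))$-partition with probability at least $1/2$; the balance follows from the centerpoint construction, while the crossing-edge bound is exactly what the verification scan certifies, so a failed trial is detected cheaply and the routine simply restarts with fresh randomness. Because a single trial already succeeds with probability at least $1/2$, the linear-cost bounds $\bigO{\graphsize{G}}$ time and $\bigO{1+\graphsize{G}/B}$ transfers hold with probability at least $1/2$. Letting $T$ denote the number of trials, independence makes $T$ stochastically dominated by a geometric random variable with $\expect{T}\le 2$, which gives the same bounds in expectation. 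For the high-probability statement, $\prob{T>k}\le 2^{-k}$, so taking $k=\bigtheta{\log\graphsize{G}}$ forces $\prob{T>k}\le\graphsize{G}^{-c}$ for any desired constant $c$; on the complementary event the total work is at most $k$ trials, i.e.\ $\bigO{\graphsize{G}\log\graphsize{G}}$ time and $\bigO{1+\graphsize{G}\log\graphsize{G}/B}$ transfers.

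The main obstacle, though a mild one, is justifying the transfer counts: I must argue that partitioning the vertices and, in particular, verifying the crossing-edge count are carried out as $\bigO{1}$ sequential scans rather than as random-access graph traversals. This hinges on the array representation described above, in which each vertex record stores its coordinates together with all of its incident edges (each edge appearing twice). Under that representation a single left-to-right pass suffices both to classify every vertex and to tally crossing edges, so the scan cost is $\bigO{1+\graphsize{G}/B}$ as claimed. The one remaining point requiring care is that the random choices in distinct trials are mutually independent and independent of the input; this is what licenses modeling $T$ as geometric and amplifying the single-trial guarantee of \thmref{GeoSep} into both the expectation and the high-probability bounds.
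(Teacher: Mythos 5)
Your proposal is correct and follows essentially the same route as the paper's proof: a linear scan per trial costing $O(\graphsize{G})$ time and $O(1+\graphsize{G}/B)$ transfers, a constant expected number of independent trials giving the expectation bounds by linearity, and the observation that $\Theta(\log\graphsize{G})$ consecutive failures occur with probability at most $\graphsize{G}^{-c}$ for the high-probability bounds. Your additional care about the array representation and the independence of trials only makes explicit what the paper leaves implicit.
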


\begin{proof}
A linear scan of $G$ takes time $O(\sizeof{G})$ and uses an asymptotically
optimal number of memory transfers.  We
expect to find a good separator after a constant number of trials, and
so the expectation bounds follow by linearity of expectation.
The probability that after selecting $c\lg{\sizeof{G}}$ candidate
separators, none are good is at most
$1/2^{c\lg{\sizeof{G}}} ={\sizeof{G}}^{-c}$.
Thus, with high probability, the geometric separator algorithm
completes in $\bigO{\graphsize{G}\log\graphsize{G}}$ on a RAM and
uses $\bigO{1+\graphsize{G}\log\graphsize{G}/B}$ memory transfers in
the DAM and cache-oblivious models.
The separator algorithm is cache-oblivious since it is not parameterized
by $B$ or $M$.
\end{proof}

\subsection*{Decomposition Trees}

A \defn{decomposition tree} $T_G$ of a graph $G=(V,E)$ is a recursive
partitioning of $G$. The root of $T_G$ is $G$. Root $G$ has left and right children
$G_0$ and $G_1$, and grandchildren $G_{00}$, $G_{01}$, $G_{10}$, $G_{11}$, and so on
recursively down the tree. Graphs $G_{0}$ and $G_{1}$ partition $G$, graphs  $G_{00}$ and $G_{01}$ partition $G_{0}$, and so on. More generally,
a node in the decomposition tree is denoted $G_p$ ($G_p\subset G$), where $p$
is a bit string representing the path to that node from the root.
We call $p$ the \defn{id} of $G_p$.  We
say that a decomposition tree is \defn{$\beta$-balanced} if for all
siblings $G_{p0}=(V_{p0},E_{p0})$ and $G_{p1}=(V_{p1},E_{p1})$ in the tree,
$\sizeof{V_{p0}},\sizeof{V_{p1}}\leq\beta\sizeof{V_p}$. We say that a
decomposition tree is \defn{balanced} if $\beta=1/2$.
For a function
$f$, \defn{graph $G$ has an $f$ decomposition tree} if for all
(nonleaf) nodes $G_{p}$ in the decomposition tree,
$\sizeof{E(G_{p0},G_{p1})}\leq f(\sizeof{V_{p}})$.
A $\beta$-balanced $f$ decomposition tree is abbreviated as an
$(f,\beta)$-decomposition tree.

For a parent node $G_p$ and its children $G_{p0}$ and $G_{p1}$, there are several
categories of edges.
\defn{Inner edges}
connect vertices that are both in $G_{p0}$ or both in $G_{p1}$.
\defn{Crossing edges} connect vertices in $G_{p0}$ to vertices in
$G_{p1}$. \defn{Outgoing edges} of $\boldmath G_{p0}$ (resp.\
$G_{p1}$) connect vertices in $G_{p0}$ (resp.\ $G_{p1}$) to
vertices in neither set, i.e., to vertices in $G-G_{p}$.
\defn{Outer edges} of $\boldmath G_{p0}$ (resp.\
$G_{p1}$) connect vertices in $G_{p0}$ (resp.\ $G_{p1}$)
to vertices in $G-G_{p0}$ (resp.
$G-G_{p1}$); thus an outer edge is either a crossing edge or an outgoing edge.
More formally,
$\inner(G_{p0})=E(G_{p0},G_{p0})$,
$\crossing(G_{p})=E(G_{p0},G_{p1})$,
$\outgoing(G_{p0})=E(G_{p0},G-G_{p})$, and
$\outer(G_{p0})=E(G_{p0},G-G_{p0})$.

We build a decomposition tree  $T_{G}$ of mesh $G$ recursively.
First we run the geometric separator algorithm on the root $G$ to find
the left and right children, $G_0$ and $G_1$.
Then we recursively build the decomposition tree rooted at $G_0$
and then the decomposition tree rooted at $G_1$.
(Thus, the right child of $T_G$ is not processed until the
whole left subtree is built.)

The decomposition tree is encoded as follows. Each leaf node
$G_{q}$ for $T_{G}$ stores the single vertex $v$ and the bit
string $q$ (the root-to-leaf path).
The leaf nodes of $T_G$ are stored contiguously in an array $L_G$.
The bit string $q$ contains enough information to
determine which nodes (subgraphs) of $T_{G}$ contain $v$ --- specifically
any node $G_{\hat{q}}$, where $\hat{q}$ is a prefix of $q$
(including $q$).  As mentioned earlier,
each vertex is stored along with its coordinates, adjacent edges, and coordinates  of all neighboring vertices in $G$.
(Recall that  each edge is therefore stored twice, once for each
of the edge's vertices.)
Each edge $e$ in $G$ is a crossing edge for exactly one node
in the decomposition tree $T_{G}$.  In $T_{G}$, each edge $e$ also
stores the id $p$ of the tree node $G_{p}$ for which $e$ is a
crossing edge.  The bit strings on nodes and edges therefore
contains enough information to
determine which edges are crossing, inner, and outer for which
tree nodes. Specifically, $e\in\crossing(G_{p})$. Let $\hat{p}$ be
a prefix of $p$ that is strictly shorter ($p\neq\hat{p}$);
then
$e\in\inner(G_{\hat{p}})$.
Let $\tilde{p}$
be bit string representing a node in $T_{G}$
where $p$ is a strictly shorter prefix of $\tilde{p}$ ($p \neq \tilde{p}$).
Then $e\in\outer(G_{\tilde{p}})$. If $\tilde{p}0$ and $\tilde{p}1$
represent nodes in $T_{G}$, then
$e\in\outgoing(G_{\tilde{p}0})$ or
$e\in\outgoing(G_{\tilde{p}1})$.

Thus, decomposition tree $T_{G}$ is laid out in memory by storing the
leaves in order in an array $L_{G}$. We do not need to store internal
nodes explicitly because the bit strings on nodes and edges encode
the tree structure.

Here are a few facts about our layout of $T_{G}$.
Given any two nodes $G_{p}$ and $G_{q}$ of $L_G$,
the common prefix of $p$ and $q$ is the smallest node in $T_{G}$
containing all vertices in both $G_{p}$ and $G_{q}$.
All the vertices in any node $G_{p}$ of $T_{G}$ are stored in a
single contiguous chunk of the array.
Thus, we can identify for $G_{p}$, which edges are inner, crossing,
outer, and outgoing by performing a single linear scan
of size $O(\sizeof{G_{p}})$.

We construct the decomposition tree $T_{G}$ by recursively partitioning of
$G$. While $T_{G}$ is in the process of being constructed, its
encoding is similar to the above, except that (1) a leaf node
$G_{q}$ may contain more than one vertex, and (2) some edges may
not yet be labelled as crossing. Thus, when the process begins,
$T_{G}$ is just a single leaf comprising $G$. The nodes are stored
in a single array of size $O(\sizeof{G})$ and are stored in an
arbitrary order.  Then we run the geometric separator algorithm.
Once we find a good separator, we partition $G$ into $G_{0}$ and
$G_{1}$, and we store $G_{0}$ before $G_{1}$ in the same array. We
label vertices of $G_0$ with bit string $0$ and vertices of $G_1$
with bit string $1$. We then run through and label all crossing
edges with the appropriate bit string (for the leaf node, the
empty string). Now the nodes in each of $G_{0}$ and $G_{1}$ are
stored in an arbitrary order, but the subarray containing $G_{0}$
is stored before the subarray containing $G_{1}$.  We then apply
the geometric separator algorithm for $G_{0}$. We partition into
$G_{00}$ and $G_{01}$, label vertices in $G_{0}$ with $00$ or
$01$, and label all crossing edges of $G_0$ with the bit string
$0$; we then do the same for $G_{00}$ and so on recursively until
all leaf nodes are graphs containing a single vertex.

We now give the complexity of building the decomposition tree.
Our high-probability bounds are based on the following observation
involving a coin with a constant probability of heads.
In order to get at least one head with
probability at least $1-1/\poly(N)$, $\Theta(\log N)$
flips are necessary and sufficient.
In order to get $\Theta(\log N)$ heads with
probability at least $1-1/\poly(N)$,
the asymptotics do not change; $\Theta(\log N)$ flips are still
necessary and sufficient.
The following lemma can be proved by Chernoff bounds (or otherwise):

\begin{lemma}
Consider $S\geq c\log N$ flips of a coin with a constant probability of
heads, for sufficiently large constant $c$.
With probability at least $1-1/\poly(N)$,
$\Theta(S)$ of the flips are heads.
\lemlabel{lognlemma}
\end{lemma}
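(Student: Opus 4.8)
The plan is to treat this as a direct application of the multiplicative Chernoff bound, after observing that only one of the two directions implied by the $\Theta(S)$ claim is nontrivial. Let $p=\Omega(1)$ denote the (constant) probability of heads on a single flip, let $X_1,\ldots,X_S$ be the indicator variables for the individual (independent) flips, and let $X=\sum_{i=1}^S X_i$ be the number of heads. By linearity, $\expect{X}=pS$, and since $S\geq c\log N$ we have $\expect{X}\geq pc\log N$.

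The upper direction, $X=O(S)$, requires no probability at all: the number of heads can never exceed the number of flips, so $X\leq S$ holds deterministically. All of the work therefore goes into the lower direction, namely showing $X=\Omega(S)$ with probability $1-1/\poly(N)$.

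For the lower direction I would invoke the standard multiplicative Chernoff lower-tail bound: for independent $\{0,1\}$ variables with mean $\mu=\expect{X}$ and any $0<\delta<1$, $\prob{X\leq(1-\delta)\mu}\leq e^{-\delta^2\mu/2}$. Taking $\delta=1/2$ yields $\prob{X\leq pS/2}\leq e^{-pS/8}\leq e^{-(pc/8)\log N}$. The event being analyzed is ``$X\geq pS/2$,'' whose defining constant is exactly $c$, the multiplier on $\log N$ in the hypothesis. Given any target polynomial degree $\gamma$ in the paper's definition of \emph{with high probability}, I would choose $c$ large enough (depending on $p$, on $\gamma$, and on the logarithm base) that $(pc/8)\log N\geq\gamma\ln N$, which drives the failure probability below $N^{-\gamma}$. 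Combining the two directions, with probability at least $1-N^{-\gamma}$ we have $pS/2\leq X\leq S$, i.e.\ $X=\Theta(S)$, as claimed.

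There is no genuine obstacle here; the only point that needs care is the bookkeeping that the constant $c$ appearing in the hypothesis \emph{is} the ``constant defining the event'' that the with-high-probability definition permits us to tune, so that enlarging $c$ buys an arbitrarily large polynomial reduction in the failure probability. One could equally well derive the tail bound from first principles (a Markov bound applied to $e^{tX}$ with an optimized $t$) instead of quoting Chernoff, but since the statement explicitly sanctions a Chernoff-bound proof, quoting the standard inequality is the cleanest route.
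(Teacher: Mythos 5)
Your proof is correct and matches the paper's intended approach: the paper simply asserts that the lemma ``can be proved by Chernoff bounds (or otherwise)'' without supplying details, and your argument---the upper direction being the deterministic bound $X\leq S$ and the lower direction being the multiplicative Chernoff lower tail with $c$ chosen large enough to achieve any desired polynomial failure probability---is exactly the standard instantiation the authors had in mind.
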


\begin{theorem}%
Let $G=(V,E)$ be a well shaped finite-element
mesh in $d$ dimensions ($d>1$).  Mesh $G$ has a
$(2d+3)/(2d+4)$-balanced-$\bigO{{\sizeof{V}^{1-1/d}}}$
decomposition tree. On a RAM, the decomposition tree can be computed in time
$\bigO{\graphsize{G}\log{\graphsize{G}}}$ both in expectation and
with high probability. The decomposition tree can
be computed in the DAM and cache-oblivious models using
$\bigO{1+(\graphsize{G}/B)\log{(\graphsize{G}/M)}}$ memory transfers
in expectation and
$\bigO{1+(\graphsize{G}/B)\log\graphsize{G}}$ memory transfers
with high probability.
\thmlabel{mesh-decomposition-tree}
\end{theorem}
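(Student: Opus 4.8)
The plan is to construct $T_G$ by recursively applying the geometric-separator algorithm of \corref{GeoSep}, and then to bound the construction cost by an argument that amortizes the wasted work of failed separator trials across the whole tree, rather than bounding each trial in isolation. For existence of the claimed tree, I would invoke \thmref{GeoSep} with $\epsilon=1/2$, which at every node $G_p$ produces a partition that is $((d+3/2)/(d+2))$-balanced, i.e.\ $((2d+3)/(2d+4))$-balanced, with crossing set of size $\bigO{\sizeof{V_p}^{1-1/d}}$. Since an induced subgraph of a well-shaped mesh again admits geometric partitions, recursing down to singletons produces an $(\bigO{N^{1-1/d}},(2d+3)/(2d+4))$-decomposition tree. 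Because $\beta=(2d+3)/(2d+4)$ is a constant strictly below $1$, sibling sizes shrink geometrically, so $T_G$ has depth $\bigO{\log\sizeof{G}}$; moreover the vertex sets at any fixed depth partition $V$ and the mesh has constant degree, so $\sum_p\sizeof{G_p}=\bigO{\sizeof{G}\log\sizeof{G}}$, where the sum ranges over all tree nodes.

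Each separator trial at $G_p$ is one linear scan of the contiguous subarray holding $G_p$ (the storage and left-before-right recursion from the preceding text guarantee contiguity), costing $\bigO{\sizeof{G_p}}$ time and $\bigO{1+\sizeof{G_p}/B}$ transfers; by \corref{GeoSep} a trial succeeds with probability at least $1/2$, so the trial count $X_p$ at node $p$ is dominated by an independent geometric variable with $\expect{X_p}=\bigO{1}$. The expected RAM time is thus $\sum_p\expect{X_p}\,\bigO{\sizeof{G_p}}=\bigO{\sum_p\sizeof{G_p}}=\bigO{\sizeof{G}\log\sizeof{G}}$. For the transfer bound I would charge recursion levels only until subproblems fit in cache: there are $\bigO{\log(\sizeof{G}/M)}$ levels whose nodes exceed size $M$, each contributing $\bigO{\sizeof{G}/B}$ expected transfers (the per-node ``$+1$'' is absorbed since such a node has $\sizeof{G_p}>M\ge B$), while each maximal subtree whose root has size at most $M$ is touched within a working set of size $\bigO{M}$ and so incurs only $\bigO{1+M/B}$ transfers; as these roots are disjoint and of size $\Omega(M)$ there are $\bigO{\sizeof{G}/M}$ of them, totalling $\bigO{\sizeof{G}/B}$. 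This gives the expected bound $\bigO{1+(\sizeof{G}/B)\log(\sizeof{G}/M)}$.

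The main obstacle is the high-probability analysis, where the temptation is to force each of the $\bigO{\sizeof{G}}$ separators to succeed within $\bigO{\log\sizeof{G}}$ trials by a union bound; this multiplies every level by an extra $\log\sizeof{G}$ and produces the wrong, squared, logarithm. Instead I would concentrate the single weighted sum $W=\sum_p X_p w_p$ directly, with $w_p=\sizeof{G_p}$ for the RAM bound and $w_p=1+\sizeof{G_p}/B$ (over nodes of size exceeding $M$) for the transfer bound. Writing $W=\sum_p w_p+\sum_p(X_p-1)w_p$, the first term is the deterministic successful cost already bounded, and for the second I would apply the moment-generating function $\expect{e^{\lambda(X_p-1)w_p}}=1/(2-e^{\lambda w_p})\le 1+\bigO{\lambda w_p}$ with the single parameter $\lambda=\Theta(1/\max_p w_p)$, giving $\prob{W\ge a}\le e^{-\lambda a}\exp\!\big(\bigO{\lambda\sum_p w_p}\big)$. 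Here $\lambda\sum_p w_p=\bigO{\log\sizeof{G}}$ in the RAM case and $\bigO{\log(\sizeof{G}/M)}$ in the transfer case, so taking $a=\Theta(\max_p w_p\cdot\log\sizeof{G})$ drives the tail below $\sizeof{G}^{-c}$ for any desired $c$. This yields $\bigO{\sizeof{G}\log\sizeof{G}}$ RAM time and $\bigO{1+(\sizeof{G}/B)\log\sizeof{G}}$ transfers with high probability, and it explains why the high-probability transfer bound carries $\log\sizeof{G}$ rather than $\log(\sizeof{G}/M)$: the Chernoff slack $\Theta((\sizeof{G}/B)\log\sizeof{G})$ dominates the expectation. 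This is precisely the amortization captured by \lemref{lognlemma} --- aggregated over the tree the number of successes is $\Theta(\sizeof{G})\gg\log\sizeof{G}$, so the total trial count, and hence the total cost, concentrates even though no individual node's trial count does.
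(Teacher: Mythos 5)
Your proposal is correct, and its skeleton --- existence via \thmref{GeoSep} with $\epsilon=1/2$, expected bounds by linearity over the $\bigO{\log\sizeof{G}}$ levels each of total size $\bigO{\sizeof{G}}$, and the recursive charging scheme that stops counting transfers once a subtree fits in cache --- coincides with the paper's. Where you genuinely diverge is the high-probability step. The paper splits the tree into two regimes: the $\Theta(\log\sizeof{G})$ largest nodes, whose aggregate trial count is controlled by \lemref{lognlemma} and whose per-trial cost is bounded crudely by the cost of the root separator (yielding the dominant $\bigO{\sizeof{G}\log\sizeof{G}}$ time and $\bigO{1+\sizeof{G}\log\sizeof{G}/B}$ transfer terms), and the remaining nodes grouped by size so that each group has $\Omega(\log\sizeof{G})$ members and \lemref{lognlemma} forces its cost to match the expectation. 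You instead run a single exponential-moment computation on the global weighted sum $W=\sum_p X_p w_p$ with one parameter $\lambda=\Theta(1/\max_p w_p)$. Both routes land on the same answer and the same explanation for why the high-probability transfer bound carries $\log\sizeof{G}$ rather than $\log(\sizeof{G}/M)$ (the slack is the root's cost times $\log\sizeof{G}$); since \lemref{lognlemma} is itself a Chernoff bound, the underlying tool is shared, but your version avoids the case split and the per-group bookkeeping, at the price of a slightly more delicate MGF calculation. One caveat applies equally to both arguments and is worth a sentence in a careful write-up: the node set and the weights $w_p$ are themselves random (they depend on which separators were found at ancestors), so the concentration bound should formally be stated as stochastic domination by a sum with the deterministic worst-case bounds $\max_p w_p\leq 1+\sizeof{G}/B$ and $\sum_p w_p=\bigO{(\sizeof{G}/B)\log(\sizeof{G}/M)}$, which hold for every realization of the tree because levels partition $V$ and the balance constant bounds the depth.
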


\begin{proof}
We first establish that the tree construction takes time
$\bigO{\graphsize{G}\log\graphsize{G}}$  on a RAM in
expectation.
The height of the decomposition tree is $\bigO{\log\graphsize{G}}$, and
the total size of all subgraphs at each height is
$\bigO{\graphsize{G}}$. Since the decomposition of a subgraph takes expected
linear time, the time bounds follow by linearity of expectation.

We next establish that the tree construction
uses $\bigO{1+(\graphsize{G}/B)\log(\graphsize{G}/M)}$ expected
memory transfers in the DAM and cache-oblivious models.
Because we build the decomposition tree recursively, we give a recursive 
analysis. The base case is when
a subtree first has size less than $M$.
For the base case, the cost to build the entire subtree
is $O(M/B)$ because this is the cost to read all
blocks of the subtree into memory.  Said differently,
once a subgraph is a constant fraction smaller than $M$, the cost to build the
decomposition tree from the subgraph is $0$, because all necessary memory blocks already reside in memory. For the recursive step, 
recall that when a subgraph $G_{p}$ has size
greater than $M$, the decomposition of a subgraph takes expected
$O(\sizeof{G_p}/B)$ memory transfers, because this is the cost of a linear scan.
Thus, there are
$\bigO{\log{(\graphsize{G}/M)}}$ levels of the tree with subgraphs bigger than 
$M$, so the algorithms uses expected 
$\bigO{1+(\graphsize{G}/B)\log{(\graphsize{G}/M)}}$ memory transfers.

We next establish the high-probability bounds.
We show that the building process uses
$\bigO{\graphsize{G}\log{\graphsize{G}}}$ time on a RAM
and $\bigO{1+\graphsize{G}\log\graphsize{G}/B}$ memory transfers
in the DAM and the cache-oblivious models with high probability.

First consider all nodes that have size
$\Omega(\sizeof{G}/\log\sizeof{G})$.
There are $\Theta(\log{\sizeof{G}})$ such nodes.
To build these nodes, we require a total of
$\Theta(\log{\sizeof{G}})$ good separators. We can view finding these
separators as a coin-flipping game, where we need $\Theta(\log{\sizeof{G}})$
heads;  by~\lemref{lognlemma} we require
$\Theta(\log{\sizeof{G}})$ coin flips.
However, separators near the top of the tree
are more expensive to find than separators deeper in the tree.
We bound the cost to find all of these separators by the cost to build the root
separator.
Thus, building these nodes uses time
$O(\sizeof{G}\log{\sizeof{G}})$ and $O(1+\sizeof{G}\log{\sizeof{G}}/B)$
memory transfers with high probability.
This is now the dominant term in the cost to build the decomposition tree.

Further down the tree, where nodes have size $O(\sizeof{G}/\log\sizeof{G})$,
the analysis is easier.
Divide the nodes to be
partitioned into groups whose sizes are within a constant
factor of each other. Now each group contains $\Omega(\log\sizeof{G})$ elements.
Thus, by~\lemref{lognlemma} the time  to build
the rest of the tree  with high probability
equals the time in expectation, which is $\Theta(\sizeof{G}\log{\sizeof{G}})$.

We now finish the bound on the number of memory transfers. As above,
because we build the decomposition tree recursively,
subtrees a constant fraction smaller than $M$ are build for free.
Also, because each group contains $\Omega(\log\sizeof{G})$ elements,
the cost to build these lower levels in the tree with high probability equals
the expected cost, which is
$O(1+(\sizeof{G}/B)\log{(\sizeof{G}/M)})$.
This cost is dominated by the cost to
build the nodes of size $\Omega(\sizeof{G}/\log\sizeof{G})$.
\end{proof}

%   END OF FILE balanced.tex

%\input{fully-balanced}
%!TEX root =  co-fem-nbody.tex

%%%%%%%%%%%%%%%%%%%%%%%
%%     SECTION 3     %%
%%%%%%%%%%%%%%%%%%%%%%%

\secput{fully-balanced}{Fully-Balanced Decomposition Trees for
Meshes}

In this section we define fully-balanced partitions and
fully-balanced decomposition trees.  We give algorithms for
generating these structures on a well shaped mesh $G$. As we show
in \secref{layout}, we use a fully-balanced decomposition tree of
a mesh $G$ to generate a cache-oblivious mesh layout of $G$. Our
construction algorithm is an improvement
over~\cite{Leighton82,BhattLe84} in two respects. First the
algorithm is faster, requiring only $O(\sizeof{G}\log^2
\sizeof{G})$ operations in expectation and with high probability,
$O(1+(\sizeof{G}/B)\log^2(\sizeof{G}/M))$ memory transfers in
expectation, and $\bigO{1+(\graphsize{G}/B)
(\log^2{(\graphsize{G}/M)} +\log{\graphsize{G}}) }$ memory
transfers with high probability. Second, the result is simplified,
no longer relying on a complicated theorem of~\cite{GoldbergWest}.

This section makes it easier to present the
main result of the paper, which appears in \secref{improved}.

\subsection*{Fully-Balanced Partitions}

To begin, we define a fully-balanced partition of a subgraph $G_p$ of $G$. A
\defn{fully-balanced $f$-partition of $G_p\subseteq G$} is a
partitioning of $G_p=(V_p,E_p)$ into two subgraphs
$G_{p0}=(V_{p0},E_{p0})$ and $G_{p1}=(V_{p1},E_{p1})$ such that
\begin{itemize}
    \item $\sizeof{\crossing(G_{p})}\leq f(\sizeof{V_{p}})$,
    \item $\sizeof{V_{p0}} = \sizeof{V_{p1}} \pm O(1)$, and
    \item $\sizeof{\outgoing(G_{p0})}=\sizeof{\outgoing(G_{p1})}\pm O(1)$.%
\end{itemize}

We give the following result before presenting  our algorithm for
computing fully-balanced partitions. The existence proof and time
complexity comprise the easiest case in~\cite{GoldbergWest}.

\begin{lemma}%
\lemlabel{NecklaceBisection}
Given an array $L$ of $N$ elements, where each element is marked either
blue or red, there
exists a subarray that contains half of the blue elements to
within one and half of red elements to within one. Such a subarray
can be found in $O(N)$ time and $O(1+N/B)$ memory transfers cache-obliviously.
\end{lemma}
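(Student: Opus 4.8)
The plan is to collapse the two simultaneous balancing requirements (half the blue and half the red) into a single one by fixing the \emph{length} of the candidate subarray, and then to locate the right window by a discrete intermediate-value argument. Let $b$ and $r$ be the total numbers of blue and red elements, so $b+r=N$, and first suppose $N$ is even. The key observation is that any subarray of exactly $N/2$ elements automatically holds $r/2$ red elements as soon as it holds $b/2$ blue ones, since its red count is forced to be $(N/2)-b/2 = r/2$. Thus for even $N$ it suffices to exhibit a window of length $N/2$ containing exactly $b/2$ blue elements.

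I would first compute $b$ with one scan, and then slide a window $W_i = L[i\mathinner{..}\,i+N/2-1]$ of length $N/2$, letting $g(i)$ be its blue count, for $i=1,\ldots,N/2+1$. The two extreme windows are complementary: $W_1$ is the first half and $W_{N/2+1}$ the second half, so $g(1)+g(N/2+1)=b$, which forces one endpoint value to be at most $\lfloor b/2\rfloor$ and the other at least $\lceil b/2\rceil$. Advancing $i$ by one drops position $i$ from the window and adds position $i+N/2$, so $g(i+1)-g(i)\in\{-1,0,+1\}$. Since the sequence steps by at most one and its ends straddle $b/2$, it must attain a value in $\{\lfloor b/2\rfloor,\lceil b/2\rceil\}$ at some index $i^*$; the window $W_{i^*}$ is then balanced in both colors to within one.

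For odd $N$ (and the odd-$b$ subcases) I would rerun the identical sweep with window length $\lfloor N/2\rfloor$. Now $W_1$ and the final window omit a single middle element, so their blue counts sum to $b$ or $b-1$, which still forces them to straddle $b/2$ and hence forces $g$ through a value within one of $b/2$. The induced red count differs from $r/2$ by at most one, because the window length is within $1/2$ of $N/2$ and the blue count is within $1/2$ of $b/2$. Carrying this $\pm O(1)$ slack cleanly through the parity cases is the one genuinely fiddly part of the proof, and it is exactly where the ``to within one'' slack in the statement is consumed; everything else is a single crossing argument. (Intuitively, the subarray's two endpoints are the two cuts that a two-color, two-part necklace bisection requires, so no third constraint is needed.)

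Finally, for the complexity, both the computation of $b$ and the sliding sweep are single left-to-right passes in which the window's two ends each advance monotonically through $L$, touching every element a constant number of times; this gives $O(N)$ time. Because the whole access pattern is a pair of sequential scans whose cost does not depend on $B$ or $M$, it incurs $O(1+N/B)$ memory transfers and is cache-oblivious. The main obstacle, such as it is, is conceptual rather than computational: verifying that fixing the window length legitimately couples the two color constraints into one, so that a single discrete intermediate-value crossing suffices.
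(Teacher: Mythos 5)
Your proof is correct and takes essentially the same approach as the paper: the paper phrases the existence step as splitting a necklace with two cuts and ``turning'' the bisection one bead at a time until the blue count balances --- noting, exactly as you do, that a window of length $N/2$ balanced in blue is automatically balanced in red --- and its algorithm is the same two-pass, length-$N/2$ sliding-window scan resolved by a discrete intermediate-value argument. Your treatment of the odd-$N$ and odd-$b$ parity cases is somewhat more explicit than the paper's, which simply remarks that for odd $N$ the middle element appears in every window.
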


\begin{proof}
This result is frequently described in terms of ``necklaces.''
Conceptually, attach the two ends of the array together
to make a necklace.
By a simple continuity argument (the easiest case of that in
\cite{GoldbergWest}), the necklace can be split into two
pieces, $A$ and $\bar{A}$, using two cuts such that both pieces
have the same number of blue elements to within one and the same
number of red elements to within one. (For details of the continuity argument,
see~\figref{necklace-fig}.) Translating back to the array,
at least one of $A$ and $\bar{A}$
does not contain the connecting point and is contiguous.

To find a good subarray, first scan $L$ to count the
number of blue elements and the number of red elements. Now rescan $L$, maintaining
a window of size $N/2$. The window initially
contains the first half of $L$ and at the end contains the second half of $L$. (For odd $N$,
the middle element of the array appears in all windows.)
Stop the scan once the window has
the desired number of red and blue elements.

Since only linear scans are used, the algorithm is cache-oblivious
and requires $\Theta(1+N/B)$ memory transfers.
\end{proof}

%\input{necklace-fig}
%!TEX root =  co-fem-nbody.tex

\begin{figure}[t]
\begin{center}
\begin{picture}(0,0)%
\epsfig{file=figures/NecklaceBisection.pstex}%
\end{picture}%
\setlength{\unitlength}{1973sp}%
\begingroup\makeatletter\ifx\SetFigFont\undefined%
\gdef\SetFigFont#1#2#3#4#5{%
  \reset@font\fontsize{#1}{#2pt}%
  \fontfamily{#3}\fontseries{#4}\fontshape{#5}%
  \selectfont}%
\fi\endgroup%
\begin{picture}(10306,6741)(1339,-7230)
\put(2551,-1786){\makebox(0,0)[lb]{\smash{{\SetFigFont{9}{10.8}{\familydefault}{\mddefault}{\updefault}{\color[rgb]{0,0,0}$A$}%
}}}}
\put(6226,-1711){\makebox(0,0)[lb]{\smash{{\SetFigFont{9}{10.8}{\familydefault}{\mddefault}{\updefault}{\color[rgb]{0,0,0}$A$}%
}}}}
\put(10426,-1711){\makebox(0,0)[lb]{\smash{{\SetFigFont{9}{10.8}{\familydefault}{\mddefault}{\updefault}{\color[rgb]{0,0,0}$A$}%
}}}}
\put(3001,-6436){\makebox(0,0)[lb]{\smash{{\SetFigFont{9}{10.8}{\familydefault}{\mddefault}{\updefault}{\color[rgb]{0,0,0}$A$}%
}}}}
\put(2851,-2986){\makebox(0,0)[lb]{\smash{{\SetFigFont{9}{10.8}{\familydefault}{\mddefault}{\updefault}{\color[rgb]{0,0,0}$\bar{A}$}%
}}}}
\put(6301,-2986){\makebox(0,0)[lb]{\smash{{\SetFigFont{9}{10.8}{\familydefault}{\mddefault}{\updefault}{\color[rgb]{0,0,0}$\bar{A}$}%
}}}}
\put(9151,-2761){\makebox(0,0)[lb]{\smash{{\SetFigFont{9}{10.8}{\familydefault}{\mddefault}{\updefault}{\color[rgb]{0,0,0}$\bar{A}$}%
}}}}
\put(2326,-5236){\makebox(0,0)[lb]{\smash{{\SetFigFont{9}{10.8}{\familydefault}{\mddefault}{\updefault}{\color[rgb]{0,0,0}$\bar{A}$}%
}}}}
\end{picture}%
%   END OF FILE figures/NecklaceBisection.pstex_t

\end{center}
\caption{Unfilled beads represent blue elements and filled beads
  represent red elements. Pick an arbitrary initial bisection $A$ and
  $\bar{A}$ of the necklace. Here $A$ contains more than half of all
  blue beads. (We can focus exclusively on blue beads because if $A$
  contains half of the blue beads to within one, it also contains half
  of red beads to within one.)  We ``turn'' the bisection clockwise so
  that $A$ takes one bead from $\bar{A}$ and relinquishes one bead to
  $\bar{A}$. Thus, the number of blue beads in $A$ can
  increase/decrease by one or remain the same after each
  turn. However, after $N/2$ turns, $A$ becomes $\bar{A}$, which
  contains less than half of all blue beads. So by a continuity
  argument, $A$ contains half of all blue beads after some number of
  turns. The argument is similar for both odd and even $N$.}
\figlabel{necklace-fig}
\end{figure}

%   END OF FILE necklace-fig.tex

We now present an algorithm for computing fully-balanced partitions.
Given $G_p\subseteq G$, and a $(f(N)=O(N^\alpha), \beta)$-partitioning
geometric separator, \fbp$(G_p)$ computes a fully-balanced
$(f(N)=O(N^\alpha))$-partition $G_{px}$ and $G_{py}$ of $G_p$.

\bigskip

\begin{kbalgbox}%
\label{alg:fully-balanced-partition}
{\fbp}($G_p$)
\begin{enumerate}

\item \emph{Build a decomposition tree} --- Build a decomposition tree
  $T_{G_p}$ of $G_p$ using the $(f(N)=O(N^\alpha),
  \beta)$-partitioning geometric separator.

\item \emph{Build a red-blue array} --- Build an array of blue and red
  elements based on the decomposition tree $T_{G_p}$. Put a blue
  element for each leaf $G_{q}$ in $T_{G_p}$; thus there is a blue
  element for each vertex $v$ in $G_p$. Now insert some red elements
  after each blue element. Specifically, after the blue element
  representing vertex $v$, insert $E(v,G-G_{p})$ red elements. Thus,
  the blue elements represent vertices in $G_{p}=(V_{p},E_{p})$ for a
  total of $\sizeof{V_{p}}$ blue elements, while the red elements
  represent edges to vertices in $G-G_{p}$, for a total of
  $E(G_{p},G-G_{p})$ red elements.

\item \emph{Find a subarray in the red-blue array} --- Find a subarray
  of the red-blue array based on \lemref{NecklaceBisection}. Now
  partition the vertices in $G_{p}$ based on this
  subarray. Specifically, put the vertices representing blue elements
  in the subarray in set $V_{px}$ and put the remaining vertices in
  $G_{p}$ in set $V_{py}$.

\item \emph{Partition $G_{p}$} --- Compute $G_{px}$ and $G_{py}$ from
  $V_{px}$ and $V_{py}$. This computation also means scanning edges to
  determine which edges are internal to $G_{px}$ and $G_{py}$ and
  which have now become external.
\end{enumerate}
\end{kbalgbox}

%\input{fullybalpart-fig}
%!TEX root =  co-fem-nbody.tex

\begin{figure}[t!]
\begin{center}
\subfigure[\sloppy An example subgraph $G_p$ of mesh $G$.
Subgraph $G_p$ has eight vertices,  ten edges, and eight outer edges
(i.e., $\sizeof{\outer(G_p) }$ = 8).
]{
\figlabel{samplegraph} %%The decomptree tree.
\begin{minipage}[b]{0.41\textwidth}
\centering
\begin{picture}(0,0)%
\epsfig{file=figures/samplegraph.pstex}%
\end{picture}%
\setlength{\unitlength}{2565sp}%
\begingroup\makeatletter\ifx\SetFigFont\undefined%
\gdef\SetFigFont#1#2#3#4#5{%
  \reset@font\fontsize{#1}{#2pt}%
  \fontfamily{#3}\fontseries{#4}\fontshape{#5}%
  \selectfont}%
\fi\endgroup%
\begin{picture}(3549,2307)(439,-3523)
\put(1576,-2311){\makebox(0,0)[lb]{\smash{{\SetFigFont{8}{9.6}{\rmdefault}{\mddefault}{\updefault}{\color[rgb]{0,0,0}1}%
}}}}
\put(1329,-3091){\makebox(0,0)[lb]{\smash{{\SetFigFont{8}{9.6}{\rmdefault}{\mddefault}{\updefault}{\color[rgb]{0,0,0}6}%
}}}}
\put(2049,-3316){\makebox(0,0)[lb]{\smash{{\SetFigFont{8}{9.6}{\rmdefault}{\mddefault}{\updefault}{\color[rgb]{0,0,0}7}%
}}}}
\put(758,-1636){\makebox(0,0)[lb]{\smash{{\SetFigFont{8}{9.6}{\rmdefault}{\mddefault}{\updefault}{\color[rgb]{0,0,0}5}%
}}}}
\put(1944,-1351){\makebox(0,0)[lb]{\smash{{\SetFigFont{8}{9.6}{\rmdefault}{\mddefault}{\updefault}{\color[rgb]{0,0,0}2}%
}}}}
\put(2439,-2213){\makebox(0,0)[lb]{\smash{{\SetFigFont{8}{9.6}{\rmdefault}{\mddefault}{\updefault}{\color[rgb]{0,0,0}4}%
}}}}
\put(3196,-1426){\makebox(0,0)[lb]{\smash{{\SetFigFont{8}{9.6}{\rmdefault}{\mddefault}{\updefault}{\color[rgb]{0,0,0}3}%
}}}}
\put(3601,-2416){\makebox(0,0)[lb]{\smash{{\SetFigFont{8}{9.6}{\rmdefault}{\mddefault}{\updefault}{\color[rgb]{0,0,0}8}%
}}}}
\end{picture}%
%   END OF FILE figures/samplegraph.pstex_t

\end{minipage}}%
\quad
\subfigure[
A decomposition tree of the subgraph $G_p$ from \protect\subref{fig:samplegraph}.
Building this decomposition tree is the first step for {\fbp}($G_p$).
The crossing edges at each node are indicated by
lines between the two children. Thus, $\crossing((G_{p})_0)=\{(1,5),(6,7)\}$
and $\crossing((G_{p})_{101})=\{(4,8)\}$.
Observe that each edge in $G_p$ is a crossing edge for exactly one node in the decomposition tree.
]{
\figlabel{fullybalpart-fig-b}
\begin{minipage}[b]{0.55\textwidth}
\centering
\begin{picture}(0,0)%
\epsfig{file=figures/FullBalPart.pstex}%
\end{picture}%
\setlength{\unitlength}{2368sp}%
\begingroup\makeatletter\ifx\SetFigFont\undefined%
\gdef\SetFigFont#1#2#3#4#5{%
  \reset@font\fontsize{#1}{#2pt}%
  \fontfamily{#3}\fontseries{#4}\fontshape{#5}%
  \selectfont}%
\fi\endgroup%
\begin{picture}(6740,4079)(968,-5928)
\put(5026,-4786){\makebox(0,0)[lb]{\smash{{\SetFigFont{8}{9.6}{\familydefault}{\mddefault}{\updefault}(2,4)(2,8)}}}}
\put(4351,-2761){\makebox(0,0)[lb]{\smash{{\SetFigFont{8}{9.6}{\familydefault}{\mddefault}{\updefault}{\color[rgb]{0,0,0}(1,2) (4,7)}%
}}}}
\put(1351,-4861){\makebox(0,0)[lb]{\smash{{\SetFigFont{8}{9.6}{\familydefault}{\mddefault}{\updefault}{\color[rgb]{0,0,0}(1,6)}%
}}}}
\put(5573,-5836){\makebox(0,0)[lb]{\smash{{\SetFigFont{8}{9.6}{\familydefault}{\mddefault}{\updefault}{\color[rgb]{0,0,0}4}%
}}}}
\end{picture}%
%   END OF FILE figures/FullBalPart.pstex_t

\end{minipage}}
\medskip
\subfigure[
The red-blue array for $G_p$. The blue elements have a dark shade. The red elements have a light shade.
There is one blue element for each vertex in $G_p$. There is one red element for each outgoing edge
in $G_p$. Since element $1$ is adjacent to two edges in $\outer(G_p)$, there are two 
red elements after it  in the red-blue array.  
The figure indicates a  subarray containing 
half of the blue elements and half of the red elements to within one. 
The red-blue array is used to make the fully-balanced partition of $G_p$. 
Specifically, $G_{px}$ will contain vertices
$2$, $5$, $6$, and $7$ and 
$G_{py}$ will contain vertices 
$1$, $3$, $4$, and $8$. 
Partition $G_{px}$ inherits three outer edges from $G_p$, and 
partition $G_{py}$ inherits five  outer edges from $G_p$. 
This particular subarray means that two paths in the decomposition tree will be cut. 
One path, separating element $1$ from $6$, goes from node $(G_p)_{00}$ to the root. 
The other path, separating element $2$ from $4$, goes from node $(G_p)_{10}$ to the root. 
The edges that are cut by this partition are the crossing edges of these nodes, i.e.,
$E(G_{px},G_{py}) = 
\{(1,6), (1,5), (6,7), (1,2), (4,7), (2,3), (3,8), (2,4), (2,8) \}$.
If $G_p$ is a node in the fully-balanced decomposition tree, then its left child 
will be 
$G_{px}$ and its right child will be $G_{py}$. 
]{
\figlabel{FullBalPart}
\begin{minipage}[b]{0.9\textwidth}
 \begin{center}
\begin{picture}(0,0)%
\epsfig{file=figures/decomptree_layout.pstex}%
\end{picture}%
\setlength{\unitlength}{2960sp}%
\begingroup\makeatletter\ifx\SetFigFont\undefined%
\gdef\SetFigFont#1#2#3#4#5{%
  \reset@font\fontsize{#1}{#2pt}%
  \fontfamily{#3}\fontseries{#4}\fontshape{#5}%
  \selectfont}%
\fi\endgroup%
\begin{picture}(9316,924)(668,-8848)
\put(8581,-8461){\makebox(0,0)[lb]{\smash{{\SetFigFont{11}{13.2}{\familydefault}{\mddefault}{\updefault}{\color[rgb]{0,0,0}3}%
}}}}
\put(2573,-8461){\makebox(0,0)[lb]{\smash{{\SetFigFont{11}{13.2}{\familydefault}{\mddefault}{\updefault}{\color[rgb]{0,0,0}6}%
}}}}
\put(773,-8461){\makebox(0,0)[lb]{\smash{{\SetFigFont{11}{13.2}{\familydefault}{\mddefault}{\updefault}{\color[rgb]{0,0,0}1}%
}}}}
\put(4366,-8461){\makebox(0,0)[lb]{\smash{{\SetFigFont{11}{13.2}{\familydefault}{\mddefault}{\updefault}{\color[rgb]{0,0,0}5}%
}}}}
\put(4981,-8469){\makebox(0,0)[lb]{\smash{{\SetFigFont{11}{13.2}{\familydefault}{\mddefault}{\updefault}{\color[rgb]{0,0,0}7}%
}}}}
\put(6174,-8461){\makebox(0,0)[lb]{\smash{{\SetFigFont{11}{13.2}{\familydefault}{\mddefault}{\updefault}{\color[rgb]{0,0,0}2}%
}}}}
\put(6766,-8461){\makebox(0,0)[lb]{\smash{{\SetFigFont{11}{13.2}{\familydefault}{\mddefault}{\updefault}{\color[rgb]{0,0,0}4}%
}}}}
\put(7374,-8461){\makebox(0,0)[lb]{\smash{{\SetFigFont{11}{13.2}{\familydefault}{\mddefault}{\updefault}{\color[rgb]{0,0,0}8}%
}}}}
\end{picture}%
%   END OF FILE figures/decomptree_layout.pstex_t

 \end{center}
\end{minipage}}

\caption{The steps of the algorithm {\fbp}($G_p$) run on a sample graph.}
\figlabel{fullybalpart-fig}
\end{center}
\end{figure}
%   END OF FILE fullybalpart-fig.tex

We first establish the running time of $\fbp(G_{p})$.

\begin{lemma}
  Given a graph $G_p$ that is a subgraph of a well shaped mesh $G$,
  $\fbp(G_{p})$ runs in $\bigO{\graphsize{G_p}\log{\graphsize{G_p}}}$
  on a RAM, both in expectation and with high probability (i.e.,
  probability at least $1-1/\poly(\graphsize{G_p})$).  In the DAM and
  cache-oblivious models, $\fbp(G_{p})$ uses
  $\bigO{1+(\graphsize{G_p}/B)\log{(\graphsize{G_p}/M)}}$ memory
  transfers in expectation and
  $\bigO{1+\graphsize{G_p}\log{\graphsize{G_p}/B}}$ memory transfers
  with high probability.
\lemlabel{FullyBalancedPartitionCO}
\end{lemma}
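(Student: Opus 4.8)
The plan is to decompose the cost of $\fbp(G_p)$ according to its four steps and observe that the first step dominates. For Step~1, building the decomposition tree $T_{G_p}$, I would invoke \thmref{mesh-decomposition-tree} directly: it gives exactly $\bigO{\graphsize{G_p}\log\graphsize{G_p}}$ time on a RAM (both in expectation and with high probability), $\bigO{1+(\graphsize{G_p}/B)\log(\graphsize{G_p}/M)}$ expected memory transfers, and $\bigO{1+(\graphsize{G_p}/B)\log\graphsize{G_p}}$ memory transfers with high probability. These are precisely the three bounds claimed in the lemma, so the entire content of the proof is to check that Steps~2--4 are lower-order.

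The remaining work is therefore to verify that each of Steps~2--4 costs only linear time and an asymptotically optimal number of memory transfers. Step~2 builds the red-blue array by scanning the leaf array $L_{G_p}$ once and emitting, for each vertex $v$, one blue element followed by $\sizeof{E(v,G-G_p)}$ red elements. The total number of blue elements is $\sizeof{V_p}$ and the total number of red elements is $\sizeof{\outgoing(G_p)}$; since $G$ has constant degree, $\sizeof{\outgoing(G_p)}=\bigO{\sizeof{V_p}}$, so the red-blue array has size $\bigO{\graphsize{G_p}}$ and is produced by a single linear scan in $\bigO{\graphsize{G_p}}$ time and $\bigO{1+\graphsize{G_p}/B}$ memory transfers. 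Step~3 applies \lemref{NecklaceBisection} to this array of size $N=\bigO{\graphsize{G_p}}$, costing $\bigO{N}=\bigO{\graphsize{G_p}}$ time and $\bigO{1+N/B}=\bigO{1+\graphsize{G_p}/B}$ memory transfers. Step~4 computes $G_{px}$ and $G_{py}$ by one more scan of the vertices and edges of $G_p$ to relabel inner and outer edges, again at cost $\bigO{\graphsize{G_p}}$ time and $\bigO{1+\graphsize{G_p}/B}$ memory transfers.

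Summing the four steps, the bounds of Step~1 dominate in every model: the $\bigO{\graphsize{G_p}\log\graphsize{G_p}}$ running time swallows the three linear terms, and likewise each memory-transfer bound of Step~1 dominates the $\bigO{1+\graphsize{G_p}/B}$ contributions of Steps~2--4. Cache-obliviousness is inherited because no step is parameterized by $B$ or $M$: Steps~2--4 are plain linear scans, and the decomposition-tree construction is cache-oblivious by \thmref{mesh-decomposition-tree}.

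The only nontrivial point---and the one I would state explicitly---is the size bound on the red-blue array in Step~2, which rests on the constant-degree property of the mesh. Without it, the number of red elements (one per outgoing edge) could exceed $\bigO{\graphsize{G_p}}$, and Step~3 would no longer be linear. Everything else is a routine composition of linear scans dominated by the decomposition-tree cost, so I expect no genuine obstacle beyond this bookkeeping.
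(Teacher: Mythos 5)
Your proposal is correct and matches the paper's own proof: both invoke \thmref{mesh-decomposition-tree} for Step~1 and observe that Steps~2--4 are linear scans of arrays of size $\bigO{\graphsize{G_p}}$, hence dominated. Your explicit remark that the constant-degree bound is what keeps the red-blue array linear in size is a detail the paper leaves implicit, but it does not change the argument.
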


\begin{proof}
According to~\thmref{mesh-decomposition-tree}, Step 1 of
\fbp($G_p$) (computing $T_{G_p}$) takes time
$\bigO{\graphsize{G_p}\log{\graphsize{G_p}}}$ on a RAM, both in
expectation and with high probability.  In the DAM and
cache-oblivious models, this steps requires
$\bigO{1+(\graphsize{G_p}/B)\log{(\graphsize{G_p}/M)}}$ memory
transfers in expectation and
$\bigO{1+\graphsize{G_p}\log{\graphsize{G_p}/B}}$ memory transfers
with high probability. Steps~2-4 of \fbp($G_p$) each require linear
scans of an array of size $O(\graphsize{G_p})$, and therefore are
dominated by Step 1.
\end{proof}

We next establish the correctness of $\fbp(G_{p})$.
In the following, let constant $b$ represent the maximum degree of mesh $G$.

\begin{lemma}
Given a well shaped mesh $G$ and a subgraph $G_p\subseteq G$, 
\fbp~generates a fully-balanced partition of $G_p$.
\label{lem:FullyBalancedPartitionGood}
\end{lemma}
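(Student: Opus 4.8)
The plan is to verify the three defining conditions of a fully-balanced $f$-partition (with $f(N)=\bigO{N^{\alpha}}$, $\alpha=1-1/d$) one at a time, using the two guarantees of the necklace procedure in \lemref{NecklaceBisection}: the subarray it returns is \emph{contiguous} and holds half of the blue elements to within one and half of the red elements to within one. Throughout, let $[s,t]$ denote the range of decomposition-tree leaves (equivalently, vertices of $G_p$ in leaf order) whose blue elements land in the chosen subarray; since that subarray is contiguous, $V_{px}=\{v_s,\dots,v_t\}$ is a contiguous block of leaves and $V_{py}=V_p\setminus V_{px}$. Write $b$ for the maximum degree of $G$. The balance condition is then immediate: the subarray carries exactly one blue element per vertex of $V_{px}$, so by \lemref{NecklaceBisection} we have $\sizeof{V_{px}}=\sizeof{V_p}/2\pm 1$, whence $\sizeof{V_{px}}=\sizeof{V_{py}}\pm O(1)$.

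For the outgoing-edge condition I would show that $\sizeof{\outgoing(G_{px})}$ agrees with the red count in the subarray up to an additive constant. Each red element is an edge from its owning vertex $v$ to $G-G_p$, and it lies in $\outgoing(G_{px})$ exactly when $v\in V_{px}$, so $\sizeof{\outgoing(G_{px})}=\sum_{i=s}^{t}\sizeof{E(v_i,G-G_p)}$ counts all red elements owned by $V_{px}$. Because a vertex's red elements sit immediately after its blue element, every interior vertex $v_s,\dots,v_{t-1}$ contributes all of its red elements to the (contiguous) subarray; only $v_t$, whose red elements may be truncated at the right endpoint, and $v_{s-1}$, whose red elements may spill into the left endpoint, can create a mismatch, and each owns at most $b$ red elements. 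Hence the red count in the subarray equals $\sizeof{\outgoing(G_{px})}\pm 2b$. Since \lemref{NecklaceBisection} makes that red count equal to $\tfrac12\sizeof{\outgoing(G_p)}$ to within one, we get $\sizeof{\outgoing(G_{px})}=\tfrac12\sizeof{\outgoing(G_p)}\pm O(1)$; as every outgoing edge of $G_p$ falls in exactly one of $\outgoing(G_{px})$, $\outgoing(G_{py})$, the two halves differ by only $O(1)$.

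The cut bound $\sizeof{\crossing(G_p)}=\sizeof{E(G_{px},G_{py})}\le f(\sizeof{V_p})$ is the crux. I would use that each edge of $G_p$ is a crossing edge of exactly one node of $T_{G_p}$. If $e\in E(G_{px},G_{py})$ is the crossing edge of node $G_q$, then one endpoint of $e$ is a leaf inside $[s,t]$ and the other a leaf outside, so the leaf interval of $G_q$ meets both $[s,t]$ and its complement; that is, $G_q$ straddles the boundary between leaves $s-1$ and $s$, or the boundary between $t$ and $t+1$. The nodes straddling a fixed boundary are precisely the common ancestors of the two leaves on either side of it, and so form a single root-to-leaf path. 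Therefore $E(G_{px},G_{py})\subseteq\bigcup_{q\in P}\crossing(G_q)$, where $P$ is a union of at most two such paths, giving $\sizeof{E(G_{px},G_{py})}\le\sum_{q\in P}\sizeof{\crossing(G_q)}\le\sum_{q\in P}f(\sizeof{V_q})$.

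To close this out I would invoke that $T_{G_p}$ is $\beta$-balanced with $\beta<1$ (\thmref{mesh-decomposition-tree}), so a node at depth $k$ on a path has $\sizeof{V_q}\le\beta^{k}\sizeof{V_p}$. With $f(N)=\bigO{N^{\alpha}}$ and $\alpha=1-1/d<1$, the sum along one path is a convergent geometric series $\sum_{k}\bigO{(\beta^{k}\sizeof{V_p})^{\alpha}}=\bigO{\sizeof{V_p}^{\alpha}}\sum_{k}\beta^{k\alpha}=\bigO{\sizeof{V_p}^{\alpha}}$, with the constant depending only on $\beta$ and $d$; two paths still give $\bigO{\sizeof{V_p}^{\alpha}}=f(\sizeof{V_p})$. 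The main obstacle is exactly this last condition: the naive bound charges $O(\log\sizeof{V_p})$ split nodes each $\bigO{\sizeof{V_p}^{\alpha}}$ crossing edges and loses a logarithmic factor, and the key observation that rescues it is that the split nodes lie on two root-to-leaf paths, so their sizes decay geometrically and the would-be $\log$ factor collapses into a constant-sum geometric series. The only remaining care is bookkeeping the $O(1)$ boundary discrepancies against the constant degree $b$.
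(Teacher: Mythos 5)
Your proposal is correct and follows essentially the same route as the paper's proof: the blue-count argument for vertex balance, the $\pm O(b)$ boundary bookkeeping for the outgoing-edge balance, and the key observation that the two subarray cuts each charge only the crossing edges along a single ancestor path in the $\beta$-balanced tree $T_{G_p}$, whose sizes decay geometrically so the sum is $\bigO{\sizeof{V_p}^{\alpha}}$ rather than $\bigO{\sizeof{V_p}^{\alpha}\log\sizeof{V_p}}$.
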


\begin{proof}
By the way that  we generate $V_{px}$ and $V_{py}$, we have
$$
\sizeof{\sizeof{V_{py}}- \sizeof{V_{px}}}\leq 1\,.
$$
This is because the number of blue elements in the subarray is exactly
$\sizeof{V_{px}}$, and the number of blue elements within and without
the subarray differ by at most one.

We next show that
\begin{equation}
  \sizeof{\sizeof{\outgoing(G_{py})}-\sizeof{\outgoing(G_{px})}}\leq
  2b+1\,.
\eqlabel{fully-balanced-condition}
\end{equation}
To determine $\sizeof{\outgoing(G_{px})}$ and
$\sizeof{\outgoing(G_{py})}$, modify the subarray as follows.
Remove from the subarray any red elements at the beginning of the
subarray before the first blue element \emph{in} the subarray.
Then add to the subarray any red elements before the first blue
element \emph{after} the subarray.  The number of red elements now
in the subarray is $\sizeof{\outgoing(G_{px})}$ and the number of
red elements not in the subarray is $\sizeof{\outgoing(G_{py})}$.
This modification can only increase or decrease
$\sizeof{\outgoing(G_{px})}$ and $\sizeof{\outgoing(G_{py})}$ each
by $b$, establishing~\eqref{fully-balanced-condition}.

Now, following~\cite{Leighton82,BhattLe84}, we show that
\begin{equation}
E(G_{px},G_{py}) \leq c{\sizeof{V_p}}^{\alpha}(1+{\beta}^{\alpha})/(1-{\beta}^{\alpha}).
\eqlabel{num-edges-crossing-both-cuts}
\end{equation}
By selecting a subarray of the red-blue array, we effectively make two
cuts on the leaves of the decomposition tree $T_{G_p}$.  (The only
time when there is apparently a single cut is if the subarray is the
first half of the array.  In this case, the second cut separates the
first leaf from the last.)  Consider one of these cuts.  The
array is split between two consecutive leaves of $\tree_{G_p}$. Denote
by $P$ the root of the smallest subtree of $\tree_{G_p}$ containing these
two leaves; see \figref{FullBalPart}. We consider the upward path
$P,P_1,P_2,\ldots,G_p$ in the decomposition tree $\tree_{G_p}$ from
$P$ up to the root $G_p$ of $\tree_{G_p}$.  Each node in the
decomposition tree on this path is a subgraph of $G$ that is being
split into two pieces.

We now count the number of edges that get removed as a result of these
splits:
\begin{eqnarray}
\sizeof{\crossing(P)\cup\crossing(P_1)\cup\crossing(P_2)\cup\ldots\cup\crossing(G_p)}
  &\leq& \sum_{i=0}^{\log_{\beta}{\sizeof{V}}} c\left ( \sizeof{V}/\beta^{i}\right )^{\alpha} \nonumber \\
  &\leq& c{\sizeof{V}}^{\alpha}/(1-\beta^{\alpha})\,.
  \eqlabel{num-edges-crossing-one-cut}
\end{eqnarray}
As reflected in \eqref{num-edges-crossing-one-cut}, each node along
the path has a different depth, which gives a geometric series.

The number of edges that cross from $G_{px}$ to $G_{py}$,
$E(G_{px},G_{py})$, is the number of edges that get removed when both
cuts get made.  However, doubling \eqref{num-edges-crossing-one-cut}
overestimates $E(G_{px},G_{py})$ by an amount $|\crossing(G_p)|$ since the root
$G_p$ can only be cut once.  Thus, doubling
\eqref{num-edges-crossing-one-cut} and subtracting $|\crossing(G_p)|$,
we establish \eqref{num-edges-crossing-both-cuts}.
\end{proof}

\subsection*{Fully-Balanced Decomposition Trees}

A \defn{fully-balanced decomposition tree} of a graph $G$ is a
decomposition tree of $G$ where the partition of every node (subgraph)
in the tree is fully-balanced.

We build a fully-balanced decomposition tree $\fbt_G$ of $G$ recursively.
First we apply the algorithm \fbp on the root $G$ to find
the left and right children, $G_0$ and $G_1$.
We next recursively build the fully balanced decomposition tree rooted at $G_0$
and the fully-balanced decomposition tree rooted at~$G_1$.

\begin{theorem}[Fully-Balanced Decomposition Tree for a Mesh]%
\thmlabel{fully-balanced-tree}
A fully-balanced decomposition tree of a mesh $G$ of constant
dimension can be computed in time
$\bigO{\graphsize{G}\log^2{\graphsize{G}}}$ on a RAM both in
expectation and with high probability.  The fully-balanced
decomposition tree can be computed in the DAM and cache-oblivious
models using $\bigO{1+(\graphsize{G}/B)\log^2{(\graphsize{G}/M)}}$
memory transfers in expectation and $\bigO{1+(\graphsize{G}/B)
  (\log^2{(\graphsize{G}/M)} +\log{\graphsize{G}})}$ memory transfers
with high probability.
\end{theorem}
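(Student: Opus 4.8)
The plan is to analyze the recursive construction of the fully-balanced decomposition tree $\fbt_G$ directly, calling \fbp on each node and recursing on its two children, and to charge the total cost using \lemref{FullyBalancedPartitionCO} for a single call, \lemref{FullyBalancedPartitionGood} for correctness, and \lemref{lognlemma} for the high-probability bounds. For correctness and shape, I would argue by induction: \lemref{FullyBalancedPartitionGood} guarantees that each invocation of \fbp yields a fully-balanced partition, so every node of $\fbt_G$ is fully-balanced and $\fbt_G$ is by definition a fully-balanced decomposition tree; the crossing-edge bound $\sizeof{\crossing(G_p)}=\bigO{\sizeof{V_p}^{1-1/d}}$ is exactly the first condition of a fully-balanced partition. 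The balance condition $\sizeof{V_{p0}}=\sizeof{V_{p1}}\pm O(1)$ forces the vertex count to essentially halve at each level, so $\fbt_G$ has height $\bigO{\log\sizeof{G}}$ and the subgraphs at any fixed level have total size $\bigO{\sizeof{G}}$.

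For the expected bounds I would sum the per-call cost of \lemref{FullyBalancedPartitionCO} level by level. Since each level has total size $\bigO{\sizeof{G}}$, the RAM cost of one level is $\sum_p \bigO{\sizeof{G_p}\log\sizeof{G_p}}=\bigO{\sizeof{G}\log\sizeof{G}}$, and over $\bigO{\log\sizeof{G}}$ levels linearity of expectation gives $\bigO{\sizeof{G}\log^2\sizeof{G}}$. For memory transfers I would reuse the ``recursion-fits-in-cache'' base case already exploited in \thmref{mesh-decomposition-tree}: once a node has size below $M$, its entire \fbp call and FB-subtree incur no transfers beyond the $\bigO{M/B}$ to read it, and these sum to $\bigO{\sizeof{G}/B}$ over all size-$M$ nodes. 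Each of the $\bigO{\log(\sizeof{G}/M)}$ levels with nodes above $M$ costs $\sum_p\bigO{1+(\sizeof{G_p}/B)\log(\sizeof{G_p}/M)}=\bigO{(\sizeof{G}/B)\log(\sizeof{G}/M)}$, yielding $\bigO{1+(\sizeof{G}/B)\log^2(\sizeof{G}/M)}$ in expectation.

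The high-probability bounds are the main obstacle, because naively summing the per-call whp bound of \lemref{FullyBalancedPartitionCO} and union-bounding over all nodes loses a factor near the top of the tree, where there are too few nodes to concentrate. My plan is instead to treat the whole construction as one coin-flipping process and count, across all \fbp calls, how many subgraphs of each size class $[s,2s)$ are ever separated. A size-$s$ subgraph appears in the decomposition tree of an FB-node of size $\sizeof{G}/2^j$ at decomposition depth $i$ with $i+j=\log(\sizeof{G}/s)$; summing $2^j\cdot 2^{\log(\sizeof{G}/s)-j}$ over the valid splits gives $\Theta((\sizeof{G}/s)\log(\sizeof{G}/s))$ such subgraphs. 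Each separation is a geometric number of trials of a coin with constant success probability, where a trial on a size-$s$ subgraph costs $\bigO{s}$ time and $\bigO{1+s/B}$ transfers. For the ``small'' classes $s\le\sizeof{G}/\log\sizeof{G}$ there are $\Omega(\log\sizeof{G})$ separations, so \lemref{lognlemma} makes the trial count, and hence the cost, whp equal to its expectation; these classes reproduce the expected bounds $\bigO{\sizeof{G}\log^2\sizeof{G}}$ and $\bigO{(\sizeof{G}/B)\log^2(\sizeof{G}/M)}$.

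It remains to handle the $\bigO{\log\log\sizeof{G}}$ ``large'' classes $s>\sizeof{G}/\log\sizeof{G}$, which is where the additive $\log\sizeof{G}$ term is born. Here each class has fewer than $\log\sizeof{G}$ separations, so I would instead apply \lemref{lognlemma} to bound the trials per class by $\bigO{\log\sizeof{G}}$ whp (that many flips yield the at most $\log\sizeof{G}$ required heads whp); summing $\bigO{\log\sizeof{G}}$ trials of cost $\bigO{1+s/B}$ over these classes telescopes to the additive $\bigO{(\sizeof{G}/B)\log\sizeof{G}}$ transfer term and a dominated $\bigO{\sizeof{G}\log\sizeof{G}}$ time term. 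A union bound over the $\bigO{\log\sizeof{G}}$ classes keeps the failure probability at $1/\poly(\sizeof{G})$, and combining the two regimes gives the claimed whp bounds. The delicate point, which I expect to be the crux, is precisely this separation into regimes: per-node whp plus a union bound would overcharge the expensive top separators by a $\log\log\sizeof{G}$ factor, and only the size-class counting lets those top separators contribute merely $\bigO{(\sizeof{G}/B)\log\sizeof{G}}$.
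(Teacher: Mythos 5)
Your proposal is correct and follows essentially the same route as the paper: expected bounds by level-wise summation with the recursion-fits-in-cache base case, and high-probability bounds by pooling the nodes of all the per-call decomposition trees into size classes (your count $\Theta((\sizeof{G}/s)\log(\sizeof{G}/s))$ is the paper's ``group $i$ has $\Theta(i2^i)$ nodes''), concentrating via \lemref{lognlemma} on the populous classes, and charging the few expensive top separators $\bigO{\log\sizeof{G}}$ trials apiece to produce the additive $\log\sizeof{G}$ term. The only cosmetic difference is that you telescope the top separators' costs over geometrically shrinking size classes, whereas the paper bounds them all by the cost of the root separator.
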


\begin{proof}
% RAM expectation
We first establish that the construction algorithm takes expected time
$\bigO{\graphsize{G}\log^2\graphsize{G}}$ on a RAM. By
\lemref{FullyBalancedPartitionCO}, for any node $G_p$ in the
decomposition tree, we need $O(\sizeof{G_p}\log \sizeof{G_p})$
operations to build the left and right children, $G_{p0}$ and $G_{p1}$, 
both in expectation and with probability at least $1-1/\poly(\sizeof{G_p})$.
Since the left and
right children, $\sizeof{G_{p0}}$ and the $\sizeof{G_{p1}}$, of every
node $G_p$ differ in size by at most $1$, $\fbt_G$ has
$\Theta(\log\sizeof{G})$ levels. If $\sizeof{G_p}$ denotes the size of a node at level $i$, then 
level $i$ has construction time 
$O(\sizeof{G}\log \sizeof{G_p})$. 
Thus, the construction-time bound follows by linearity of expectation.

% CO expectation
We next establish that the construction algorithm uses
$\bigO{1+\graphsize{G}\log^2{(\graphsize{G}/M)/B}}$ expected memory
transfers in the DAM and cache-oblivious models. 
Because we build the decomposition tree recursively, we give
a recursive analysis. 
The base case is when a node $G_p$ has size less than $M$ while its
parent node is greater than $M$.  Then the cost to build the entire
subtree $T_{G_p}$ is only $O(M/B)$, because this is the cost to read
all blocks of $G_p$ into memory.  Said differently, once a node is a
constant fraction smaller than $M$, the cost to build the
fully-balanced decomposition tree is $0$ because all necessary memory
blocks already reside in memory. There are therefore
$\Theta(\log\sizeof{G} - \log M)$ levels of the fully-balanced
decomposition tree having nonzero construction cost. Each level uses
at most $O((\sizeof{G}/B)\log(\sizeof{G}/M))$ memory transfers.  Thus,
the time bounds follows by linearity of expectation.

% RAM and CO, w.h.p.
We next establish the high-probability bounds.  In the following
analysis, we examine, for each node $G_p$ in the fully-balanced 
decomposition tree, the decomposition tree $T_{G_p}$ that is used 
to build that node. We then group the nodes of all the decomposition trees by size
and count the number of nodes in each group. 

As an example, suppose that $\graphsize{G}$ is a power of two and all splits are even. 
There is one node of size $\graphsize{G}$ ---
the root node of the decomposition tree $T_G$.  There are four nodes
of size $\graphsize{G}/2$ --- two nodes in $T_G$, one node in
$T_{G_0}$, and one node in $T_{G_1}$. There are 12 node of size
$\graphsize{G}/4$ --- four nodes in $T_G$, two nodes in $T_{G_0}$, two
node in $T_{G_1}$, and one node in each of $T_{G_{00}}$, $T_{G_{01}}$,
$T_{G_{10}}$, and $T_{G_{11}}$.  

In general, let group $i$ contain all
decomposition tree nodes having size in the range
$(\graphsize{G}/2^{i}, \graphsize{G}/2^{i-1}]$.
Then group $i$ contains $\Theta(i2^i)$ nodes.

% RAM, w.h.p.
Analyzing each group separately, we show that the construction algorithm takes time
$O(\sizeof{G}\log^2{\sizeof{G}})$ on a RAM with high probability.
First, consider the $\Theta(\log\sizeof{G})$ largest nodes (those most
expensive to build), i.e., those in the smallest cardinality groups.  As
analyzed in~\thmref{mesh-decomposition-tree}, building these nodes
takes time $O(\sizeof{G}\log{\sizeof{G}})$ with high probability.

We analyze the rest of the node constructions group by group.  Since
each group $i$ contains $\Theta(i 2^{i-1})$ nodes, each successive
group contains more nodes than the total number of nodes in all
smaller groups.  As a result, there are $\Omega(\log{\sizeof{G}})$
nodes in each of the rest of the groups.  Thus, by~\lemref{lognlemma}, the
time to build the rest of the tree with high probability is the same
as the time in expectation, which is
$O(\sizeof{G}\log^2{\sizeof{G}})$.  Thus, we establish
high-probability bounds on the running time.

%CO, w.h.p.
We now show that the construction algorithm takes
$\bigO{1+(\graphsize{G}/B) (\log^2({\graphsize{G}/M})
  +\log{\graphsize{G}}) }$ memory transfers with high probability.
First consider the $\Theta(\log\sizeof{G})$ largest nodes (those most
expensive to build).  As analyzed in~\thmref{mesh-decomposition-tree},
building these nodes uses $O(1+\sizeof{G}\log{\sizeof{G}}/B)$ memory
transfers with high probability.
Now examine all remaining nodes.   We
consider each level separately.  Each group contains
$\Omega(\log{\sizeof{G}})$ nodes.  Thus, by~\lemref{lognlemma}, the
high-probability cost of building the decomposition trees for all
remaining nodes matches the expected cost, which is
$O(1+(\sizeof{G}/B) \log^2{(\sizeof{G}/M)})$ memory transfers.
Thus, with high probability, the
construction algorithm takes $\bigO{1+(\sizeof{G}/B)(\log\sizeof{G} +
  \log^2{(\sizeof{G}/M)})}$ memory transfers with high probability, as promised.
\end{proof}

\subsection*{\boldmath $k$-Way Partitions}

We observe one additional benefit of \thmref{fully-balanced-tree}.
In addition to providing a simpler and faster algorithm for
constructing fully-balanced decomposition trees, we also provide a
new algorithm for $k$-way partitioning, as described
in~\cite{KiwiSpielmanTeng}.
For any positive integer $k>1$, a \defn{$k$-way partition} of a graph
$G = (V,E)$, is a $k$-tuple $(V_1,V_2,\ldots, V_k)$ (hence
$(G_1,G_2,\ldots, G_k)$) such that $\cup_{1\leq i \leq k} V_i = V$ and
$V_i \cap V_j=\emptyset$ for $i\neq j, 1\leq i, j \leq k$.  For any
$\beta \geq 1$, $(V_1,V_2,\ldots, V_k)$ is a $(\beta,k)$-way partition
if $\sizeof{G_i} \leq \beta\lceil \sizeof{G}/k \rceil$, for all $i\in
\{1,\ldots, k\}$.  It has been shown in \cite{KiwiSpielmanTeng} that
every well shaped mesh in $d$ dimensions has a $(1+\epsilon,k)$-way
partition, for any $\epsilon> 0$, such that $\max_i\{\outer{(G_i)}\} =
O((\sizeof{G}/k)^{1-1/d})$.

We now describe our $k$-way partition algorithm of a well shaped mesh
$G$. The objective is to evenly divide leaves of a fully-balanced
decomposition tree of $G$ into $k$ parts such that their number of
vertices are the same within one. First build a fully-balanced 
decomposition tree. Now assign the first $\sizeof{V}/k$ leaves to 
$V_1$, the next $\sizeof{V}/k$ leaves to $V_2$, and so on. 

In fact, we can modify this approach so that it runs faster by 
observing that we need not build the complete fully-balanced 
decomposition tree. First build the top 
$\Theta(\log k)$ levels of the tree, so that there are $\poly(k)$ leaves. 
At most $k$ of these leaves need to be refined further, since the remaining 
leaves will all belong to a single group $V_{i}$.

Our $k$-way partition algorithm using fully-balanced decomposition trees is
incomparable to the algorithm of~\cite{KiwiSpielmanTeng}. By building
fully-balanced decomposition tree, even a partial one, our algorithm
is slower than the algorithm of~\cite{KiwiSpielmanTeng}, which uses
geometric separators for partitioning instead. On the other hand, it
can be used to divide the nodes into $k$ sets whose sizes are
equal to within an additive one, instead of only asymptotically
the same size as in~\cite{KiwiSpielmanTeng}.
%   END OF FILE fully-balanced.tex

%\input{layout}
%!TEX root =  co-fem-nbody.tex

\secput{layout}{Cache-Oblivious Layouts}

In this section we show how to find a cache-oblivious layout of a mesh
$G$.  Given such a layout, we show that a mesh update runs
asymptotically optimally in $\Theta(1+\sizeof{G}/B)$ memory transfers
given the tall cache assumption that $M=\Omega(B^d)$. We also analyze
the performance of a mesh update when $M=o(B^d)$, bounding the
performance degradation for smaller $M$.

The layout algorithm is as follows.

\medskip

\begin{kbalgbox}%
\label{alg:layout}
{\sf CacheObliviousMeshLayout}($G$)
\begin{enumerate}
\item Build a $f(N)=O(N^{1-1/d})$ fully-balanced decomposition tree
  $T_G$ of $G$, as described in \thmref{fully-balanced-tree}.

\item Reorder the vertices in $G$ according to the order of the leaves
  in $T_{G}$. (Recall that each leaf in $T_{G}$ stores a single vertex
  in $G$.)  This reorder means: (a)~assign new indices to all vertices
  in the mesh, and (b)~for each vertex, let all neighbor vertices know
  the new index.

\end{enumerate}
\end{kbalgbox}%

We now describe the mechanics of relabeling and reordering.  Each
vertex knows its ordering and location in the input layout; this is
the vertex's index. A vertex also knows the index of each of its
neighboring vertices.  When we change a vertex's index, we apprise all
neighbor vertices of the change.  These operations entail a small
number of scans and cache-oblivious
sorts~\cite{FrigoLePrRa99,Prokop99,BrodalFaVi-08,BrodalFa02a}, for a
total cost of $O((\sizeof{G}/B)\log_{M/B}(\sizeof{G}/B)$ memory
transfers.  This cost is dominated by the cost to build the
fully-balanced decomposition tree.  (Thus, a standard merge sort,
which does not minimize the number of memory transfers, could also be
used.)

The cleanest way to explain is through an
example. Suppose that we have input graph
$
G=\set{ \set{a,b,c,d}, \set{(a,c), (a,d),(b,c),(c,d)} }
$,
which is laid out in input order:
$$
(a,c), (a,d), (b,c), (c,a), (c,b), (c,d), (d,a), (d,c)\,.
$$
Suppose that the leaves of fully-balanced decomposition tree are in
the order of $a,c,d,b$. This means that the renaming of nodes is as
follows: $[a:1],[c:2],[d:3],[b:4]$. (For clarity, we change indices
from letters to numbers.)  We obtain the reverse mapping $[a:1],
[b:4], [c:2], [d:3]$ by sorting cache-obliviously. We change the
labels on the first component of the edges by array scans:
$$
(a=1,c), (a=1,d), (b=4,c), (c=2,a), (c=2,b), (c=2,d), (d=3,a),
(d=3,c)\,.
$$
We then sort the edges by the second component,
$$
(c=2,a), (d=3,a), (c=2,b), (a=1,c), (b=4,c), (d=3,c), (a=1,d), (c=2,d)
\,,
$$
and change the labels on the second component of the edge by another
scan:
\begin{eqnarray*}
&&(c=2,a=1), (d=3,a=1),  (c=2,b=4), (a=1,c=2), (b=4,c=2), (d=3,c=2), \\ 
&&(a=1,d=3),  (c=2,d=3) \,.
\end{eqnarray*}
We get
$$
(2,1), (3,1),  (2,4), (1,2), (4,2), (3,2), (1,3),  (2,3) \,.
$$
We sort these edges by the first component to get the final layout.
The final layout is
$$
(1,2),  (1,3), (2,1), (2,3),  (2,4), (3,1),  (3,2), (4,2) \,.
$$

Thus, we obtain the following layout performance:
\begin{theorem}
  A cache-oblivious layout of a well shaped mesh $G$ can be computed
  in $O(\sizeof{G}\log^2\sizeof{G})$ time both in expectation and with
  high probability.  The cache-oblivious layout algorithm uses
  $O(1+\sizeof{G}\log^2(\sizeof{G}/M)/B)$ memory transfers in
  expectation and
  $O(1+(\sizeof{G}/B)(\log^2(\sizeof{G}/M)+\log\sizeof{G}))$ memory
  transfers with high probability.
\label{thm:cache-oblivious-layout-algo}
\end{theorem}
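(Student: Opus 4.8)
The plan is to charge the entire cost of {\sf CacheObliviousMeshLayout}($G$) to its two steps and show that the reordering step contributes nothing asymptotically beyond the tree-building step. By \thmref{fully-balanced-tree}, Step~1 --- building the fully-balanced decomposition tree $T_G$ --- runs in time $\bigO{\sizeof{G}\log^2\sizeof{G}}$ on a RAM, both in expectation and with high probability, and uses $\bigO{1+(\sizeof{G}/B)\log^2(\sizeof{G}/M)}$ memory transfers in expectation and $\bigO{1+(\sizeof{G}/B)(\log^2(\sizeof{G}/M)+\log\sizeof{G})}$ memory transfers with high probability. These are precisely the three bounds asserted in the theorem, so it suffices to argue that Step~2 is dominated.

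For Step~2 I would observe, following the worked example preceding the theorem, that the relabeling and reordering consist of a constant number of linear scans over arrays of size $O(\sizeof{G})$ together with a constant number of cache-oblivious sorts of $O(\sizeof{G})$ items. A scan costs $O(\sizeof{G})$ time and $\bigO{1+\sizeof{G}/B}$ memory transfers, and a cache-oblivious sort costs $O(\sizeof{G}\log\sizeof{G})$ time and $\bigO{(\sizeof{G}/B)\log_{M/B}(\sizeof{G}/B)}$ memory transfers. Summing over the constantly many operations gives a \emph{deterministic} Step-2 cost of $O(\sizeof{G}\log\sizeof{G})$ time and $\bigO{1+(\sizeof{G}/B)\log_{M/B}(\sizeof{G}/B)}$ memory transfers; because no randomness enters Step~2, its expectation and high-probability costs coincide, and the high-probability statement for the whole algorithm reduces entirely to Step~1.

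It then remains to compare the two steps. On a RAM, $\sizeof{G}\log\sizeof{G}=O(\sizeof{G}\log^2\sizeof{G})$, so the running time of Step~2 is dominated. For memory transfers I would use $\log_{M/B}(\sizeof{G}/B)\le\log(\sizeof{G}/B)\le\log\sizeof{G}$, which immediately absorbs the sort cost into the additive $\log\sizeof{G}$ term of the high-probability bound; for the expectation bound I would write $\log(\sizeof{G}/B)/\log(M/B)=1+\log(\sizeof{G}/M)/\log(M/B)$ and invoke the tall-cache regime $M/B\ge 2$ (implied by $M=\Omega(B^d)$ with $d>1$) to conclude $\log_{M/B}(\sizeof{G}/B)=O(\log^2(\sizeof{G}/M))$ whenever $\sizeof{G}/M$ exceeds a constant. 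Adding the two steps and applying linearity of expectation (and, trivially, the high-probability guarantee of Step~1) then yields the claimed bounds.

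The main obstacle is the mild one of verifying that the sorting factor $\log_{M/B}(\sizeof{G}/B)$ never overtakes the $\log^2(\sizeof{G}/M)$ factor from the tree construction, so that Step~2 is genuinely absorbed rather than contributing a separate additive term. This calls for a brief case split on whether $\sizeof{G}$ is within a constant factor of $M$ --- where the leading $1+\cdots$ swallows all logarithmic contributions --- or substantially larger, where the logarithmic comparison above applies under the tall-cache assumption.
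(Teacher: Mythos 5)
Your proposal is correct and follows essentially the same route as the paper: the paper likewise charges Step~1 to \thmref{fully-balanced-tree} and observes that the relabeling/reordering consists of $O(1)$ scans and cache-oblivious sorts costing $O((\sizeof{G}/B)\log_{M/B}(\sizeof{G}/B))$ memory transfers, which it asserts is dominated by the tree construction. Your explicit verification that $\log_{M/B}(\sizeof{G}/B)=O(\log^2(\sizeof{G}/M))$ under the tall-cache assumption (with the case split near $\sizeof{G}=\Theta(M)$) simply fills in a step the paper leaves implicit.
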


With such a layout, we can perform a mesh update cache-obliviously.
\begin{theorem}%
  Every well shaped mesh $G$ in $d$ dimensions has a layout that
  allows the mesh to be updated cache-obliviously with
  $O(1+\sizeof{G}/B)$ memory transfers on a system with block size $B$
  and cache size $M=\Omega(B^d)$.
\label{thm:mainCO}
\end{theorem}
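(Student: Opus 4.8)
The plan is to analyze the mesh update directly on the layout produced by \thmref{cache-oblivious-layout-algo}, namely the in-order listing of the leaves of a fully-balanced $f(N)=O(N^{1-1/d})$ decomposition tree $T_G$. The only structural facts I will need are the two guaranteed by \thmref{fully-balanced-tree} and the layout construction: (a) the vertices of every tree node $G_p$ occupy a single contiguous chunk of the array, of length $\sizeof{V_p}$; and (b) $\sizeof{\crossing(G_p)}=\bigO{\sizeof{V_p}^{1-1/d}}$. Because the tree is (fully) balanced, the nodes at a fixed depth all have size $\Theta(\sizeof{G}/2^{\ell})$ up to negligible additive terms (sibling sizes differ by $\bigO{1}$). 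A mesh update is a left-to-right scan of the vertices in which, at each vertex $i$, I read $i$'s local adjacency data and then fetch the value $w_j$ of each neighbor $j$ from $j$'s array position; so the total cost is the scan cost plus the cost of these neighbor fetches.

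The key move is to fix, \emph{purely for the analysis} (the algorithm itself never sees $M$ or $B$, which is exactly what makes the resulting bound cache-oblivious), the tree level whose nodes have size $\Theta(M)$; call these the \emph{$M$-nodes}. There are $\bigO{\sizeof{G}/M}$ of them, each a contiguous region that fits in cache. I then split the update cost into three parts: (i)~loading each $M$-node into cache once, which totals $\bigO{\sizeof{G}/M}\cdot\bigO{M/B}=\bigO{\sizeof{G}/B}$ transfers and subsumes the vertex scan; (ii)~neighbor fetches across edges internal to an $M$-node, which are free because the whole node is resident; and (iii)~neighbor fetches across edges leaving an $M$-node, i.e.\ the outer edges. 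Choosing the node size to be a small constant fraction of $M$ leaves room to stream the external fetches, so each outer-edge access costs $\bigO{1}$ transfer under optimal replacement, and hence, up to constants, under LRU.

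The heart of the argument is bounding $\sum_p \sizeof{\outer(G_p)}$ over the $M$-nodes. An edge with both endpoints in one $M$-node is internal and contributes $0$; every other edge is an outer edge of exactly the two $M$-nodes holding its endpoints and, since each edge is a crossing edge of a unique tree node, it is cut at a strict ancestor, i.e.\ at a level whose node size $s$ exceeds $M$. The number of edges cut at the level of size $s$ is $(\sizeof{G}/s)\cdot\bigO{s^{1-1/d}}=\bigO{\sizeof{G}\,s^{-1/d}}$, and summing over $s=2M,4M,\ldots,\sizeof{G}$ gives a geometric series dominated by its smallest term, so $\sum_p \sizeof{\outer(G_p)}=\bigO{\sizeof{G}\,M^{-1/d}}$. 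This is precisely where the tall-cache hypothesis enters: $M=\bigomega{B^{d}}$ gives $M^{1/d}=\bigomega{B}$, whence part (iii) costs $\bigO{\sizeof{G}/M^{1/d}}=\bigO{\sizeof{G}/B}$. Adding the three parts yields $\bigO{1+\sizeof{G}/B}$ transfers.

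The step I expect to be the main obstacle is part (iii): both the convergence of the geometric series and the appearance of the exponent $d$ in the tall-cache condition hinge on the separator size being $O(N^{1-1/d})$, and the passage from the idealized ``keep $G_p$ resident'' accounting to a genuine replacement policy needs the standard optimal-versus-LRU resource-augmentation argument. I note that the same computation also yields the degradation claimed in the introduction: with $M=\bigO{B^{d-\epsilon}}$ one has $M^{1/d}=\Theta(B^{1-\epsilon/d})$, so the outer-edge term becomes $\bigO{\sizeof{G}/B^{1-\epsilon/d}}$ and dominates the scan cost. Finally, the $\bigO{1+\sizeof{G}/B}$ bound is optimal, since any update must read all of $G$.
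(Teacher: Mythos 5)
Your argument is correct and follows essentially the same route as the paper's proof: fix the level of the fully-balanced tree whose (contiguous) nodes have size $\Theta(M)$, load each such node with $O(M/B)$ transfers, observe that internal edges are free, bound the outer edges via the geometric sum of the $O(s^{1-1/d})$ separators over the ancestor levels, and invoke the tall-cache assumption as $M^{1/d}=\Omega(B)$. The only difference is cosmetic accounting --- you bound the outer edges in aggregate as $O(\sizeof{G}M^{-1/d})$ and charge each one a transfer, while the paper bounds them per node as $O(M^{1-1/d})$ and argues the whole working set $M+BM^{1-1/d}=O(M)$ fits in cache --- and your explicit handling of the LRU-versus-optimal replacement issue is a point the paper leaves implicit.
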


\begin{proof}
We apply the algorithm described above on $G$ to build the layout.
Since each vertex of $G$ has constant degree bound $b$, its
size is bounded by a constant. Consider a row of nodes
$G_{p_1},G_{p_2},G_{p_3}\ldots$ in $T_{G}$ at a level such that each
node $G_{p_i}$ uses $\Theta(M)<M$ space and therefore fits in a
constant fraction of memory.

In the mesh update, the nodes of $G$ are updated in the order of
the layout, which means that first the vertices in $G_{p_1}$ are
updated, then vertices of $G_{p_2}$, then vertices of $G_{p_3}$,
etc. To update vertices of $G_{p_i}$, the vertices must be brought
into memory, which uses at most $O(1+M/B)$
 memory transfers.  In the mesh update, when we update a vertex $u$,
we access $u$'s neighbors. If the neighbor $v$ of $u$ is also in
$G_{p_i}$, i.e., edge $(u,v)$ is internal to $G_{p_i}$, then
accessing this neighbor uses no extra memory transfers. On the
other hand, if the neighbor $v$ is not in $G_{p_i}$,
then
following this edge requires another transfer hence an
extra block to be read into memory.

We now show that $\outer{(G_{p_i})}=
O(\sizeof{G_{p_i}}^{1-1/d})$. Since all subgraphs at the same level of
the fully-balanced decomposition tree are of the same size within one,
and outgoing edges of any subgraph are evenly split, each $G_{p_i}$
has roughly the same number of outer edges.  Suppose $G_{p_i}$ is in
level $j$. The total number of their outer edges are at most
$$
\sizeof{G}^{{1-1/d}} + 2\left(\frac{\sizeof{G}}{2}\right)^{{1-1/d}}  +
4 \left(\frac{\sizeof{G}}{4}\right)^{{1-1/d}}  + \ldots +
2^j \left(\frac{\sizeof{G}}{2^j}\right)^{{1-1/d}}  
\leq
\left(\frac{2^j}{2^{1/d} - 1}\right)
\left(\frac{\sizeof{G}}{2^j}\right)^{{1-1/d}} .
$$

Hence, $\outer{(G_{p_i})}=
O(\left(\sizeof{G}/{2^j}\right)^{{1-1/d}}) =
O(\sizeof{G_{p_i}}^{1-1/d})=O(M^{1-1/d})$. Therefore the total
size of memory that we need to perform a mesh update of the
vertices in $G_{p_i}$ is $\Theta(M+BM^{1-1/d})$.

By the tall-cache assumption that $B^d\leq M$, i.e., $B\leq M^{1/d}$,
and for a proper choice of constants, the mesh update for $G_{p_i}$
only uses $\Theta(M)<M$ memory. Since updating each node $G_{p_i}$ of
size $\Theta(M)$ uses $O(1+M/B)$ memory transfers, and there are a
total of $O(\sizeof{G}/M)$ such nodes, the update cost is
$O(1+\sizeof{G}/B)$, which matches the scan bound of $G$, and is
optimal.
\end{proof}

Thus, for dimension $d=2$, we have the ``standard'' tall-cache
assumption~\cite{FrigoLePrRa99}, and for higher dimensions we have a
more restrictive tall-cache assumption. We now analyze the tradeoff
between cache height and complexity.  Suppose instead of a cache with
$M=\Omega(B^d)$, the cache is of $M=\Omega(B^{d-\epsilon})$. We assume
$\epsilon < d-1$. We show that the cache performance of mesh update is
$B^{\epsilon/d}$ away from optimal.

\begin{corollary}
Every well shaped mesh $G$ in $d$ dimensions has a vertex ordering
that allows the mesh to be updated cache-obliviously with
$O(1+\sizeof{G}/B^{1-\epsilon/d})$ memory transfers on a system
with block size $B$ and cache size $M=\Omega(B^{d-\epsilon})$.
\end{corollary}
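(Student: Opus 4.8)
The plan is to reuse verbatim the layout and the node-by-node update strategy from the proof of \thmref{mainCO}, and simply re-examine the cost of fetching outer edges once the tall-cache assumption $M=\Omega(B^d)$ is weakened to $M=\Omega(B^{d-\epsilon})$. Concretely, I would build the same $f(N)=O(N^{1-1/d})$ fully-balanced decomposition tree layout and consider the level of $T_G$ at which each node $G_{p_i}$ has size $\Theta(M)$; such a level exists because sibling sizes are equal to within one and halve from level to level. The geometric-sum bound from the proof of \thmref{mainCO} that yields $\sizeof{\outer(G_{p_i})}=O(\sizeof{G_{p_i}}^{1-1/d})=O(M^{1-1/d})$ nowhere uses the tall-cache assumption, so it carries over unchanged.

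The essential difference lies in accounting for the outer edges. Under the tall-cache assumption the $O(M^{1-1/d})$ blocks touched by outer edges fit, together with the node, inside the cache; without it they need not. So instead of insisting that the whole footprint be resident, I would charge each outer edge a single memory transfer: while updating the $\Theta(M)$ vertices of $G_{p_i}$ (whose contiguous $\Theta(M/B)$ blocks I keep resident, taking the node to be a small constant fraction of $M$ so that $\Theta(M/B)$ scratch blocks remain), each of the $O(M^{1-1/d})$ outer edges is examined exactly once and forces at most one block transfer to fetch its external endpoint, independent of the replacement policy. Thus updating one node costs $O(1+M/B+M^{1-1/d})$ transfers.

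Summing over the $O(\sizeof{G}/M)$ nodes then gives $O(\sizeof{G}/M + \sizeof{G}/B + \sizeof{G}/M^{1/d})$ transfers. The first term is dominated by the second because $\epsilon<d-1$ guarantees $M=\Omega(B^{d-\epsilon})=\omega(B)$, and substituting $M^{1/d}=\Omega(B^{(d-\epsilon)/d})=\Omega(B^{1-\epsilon/d})$ shows the third term is $O(\sizeof{G}/B^{1-\epsilon/d})$, which in turn dominates the second since $B^{1-\epsilon/d}\le B$. Hence the total is $O(1+\sizeof{G}/B^{1-\epsilon/d})$, as claimed; note that this is exactly a scan with effective block size $M^{1/d}=B^{1-\epsilon/d}$.

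I expect the main obstacle to be precisely this per-node outer-edge accounting: arguing rigorously that, even though the external blocks no longer co-reside with the node in cache, the transfers charged to outer edges remain $O(M^{1-1/d})$ per node and the node's own $\Theta(M/B)$ blocks are not repeatedly evicted and re-read. This is exactly where the tall-cache assumption was implicitly doing work in \thmref{mainCO}, so replacing it cleanly—by a one-transfer-per-outer-edge charging argument that is robust to the unknown cache-oblivious replacement policy—is the crux of the degradation bound.
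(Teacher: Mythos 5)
Your proposal is correct and follows essentially the same route as the paper: both reuse the layout and per-node analysis of \thmref{mainCO}, keep the tall-cache-free bound $\sizeof{\outer(G_{p_i})}=O(M^{1-1/d})$, and account for a per-node footprint of $\Theta(M+BM^{1-1/d})$, i.e., $O(1+M/B+M^{1-1/d})$ transfers per node, before summing over the $O(\sizeof{G}/M)$ nodes and substituting $M^{1/d}=\Omega(B^{1-\epsilon/d})$. Your explicit one-transfer-per-outer-edge charging is just a more careful phrasing of the paper's footprint-divided-by-$B$ computation.
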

\begin{proof}
We apply similar analysis to that in \thmref{mainCO} on $G$.
From \thmref{mainCO}, the total size
of memory that we need to update mesh
$G_{p_i}$ is $\Theta(M+BM^{1-1/d})$. Since $M=\Omega(B^{d-\epsilon})$, we
have
\begin{eqnarray*}
O(M+BM^{1-1/d})&=& O(M+M\frac{B}{M^{1/d}}) \\
&\leq&
O(M+M\frac{B}{B^{1-\epsilon/d}})\\
 &=& O(M + MB^{\epsilon/d})\,.
\end{eqnarray*}
Thus, updating $G_{p_i}$ uses $O(1+ (M + MB^{\epsilon/d})/B)$ memory
transfers, which simplifies to $O(1 + \sizeof{G}/B^{1-\epsilon/d})$
memory transfers.
\end{proof}

%   END OF FILE layout.tex

%\input{improved}
%!TEX root =  co-fem-nbody.tex

\secput{improved}{\FLB Decomposition Trees and Faster Cache-Oblivious Layouts}

In this section we give the main result of this paper, a faster
algorithm for finding a cache-oblivious mesh layout of a well-shaped
mesh.  The main idea of the algorithm is to construct a new type of
decomposition tree, which we call a \defn{\rb decomposition tree}. The \rb
decomposition tree is based on what we call a \defn{\rb partition}. We give
an algorithm for building an \rb decomposition tree whose performance
is nearly a logarithmic factor faster than the algorithm for building
a fully-balanced decomposition tree.
We prove that an asymptotically optimal cache-oblivious mesh layout
can be found by traversing the leaves of the \rb decomposition tree.

\subsection*{\RB Partitions}

We first define the \rb partition of a subgraph $G_p$ of $G$. A
\defn{\rb $f$-partition} of $G_p\subseteq G$ is a
partitioning of $G_p$ into two subgraphs $G_{p0}$ and $G_{p1}$ such
that
\begin{itemize}
    \item $\sizeof{\crossing(G_{p})}\leq f(\sizeof{G_{p}})$,
    \item $\sizeof{G_{p0}} = \sizeof{G_{p1}} \pm
      \bigO{\sizeof{G_{p}}/\log^3{\sizeof{G}}}$,
         and
    \item  $\sizeof{\outgoing(G_{p0})}=\sizeof{\outgoing(G_{p1})}%
      \pm \bigO{\sizeof{\outgoing{(G_{p1})}}/\log^2{\sizeof{G}}}$.%
\end{itemize}

We next present an algorithm, \rbp, for computing \-balanced
partitions.  Given $G_p\subseteq G$, and a $(f(N)=O(N^\alpha),
\beta)$-partitioning geometric separator, {\rbp}($G_p$) computes a
\rb $(f(N)=O(N^\alpha))$-partition $G_{px}$ and $G_{py}$ of
$G_p$.

We find the \rb partition by building what we call a \defn{relax
  partition tree} $\tree_{G_p}$.
We call the top $3\log_{1/\beta}\log{\sizeof{G}}$ levels of
$\tree_{G_p}$ the~\defn{upper tree} of $\tree_{G_p}$ and the remaining
levels the \defn{lower tree} of $\tree_{G_p}$.

We build the upper tree by building the top
$3\log_{1/\beta}\log{\sizeof{G}}$ levels of a decomposition tree of
$G_p$.  By construction, all leaves of the upper tree (subgraphs of
$G_p$) contain at most $\sizeof{G_p}/\log^3{\sizeof{G}}$ vertices.
Outer edges of $G_p$ are distributed among these leaves.  By a
counting argument, there are at most $\log^2{\sizeof{G}}$ leaves that
can contain more than $\sizeof{\outer(G_p)}/\log^2{\sizeof{G}}$ outer
edges of $G_p$.

For each upper-tree leaf having more than
$\sizeof{\outer(G_p)}/\log^2{\sizeof{G}}$ outer edges, we refine the
leaf by building a decomposition tree on it.  We do not refine the
other leaves of the upper tree.  The union of these decomposition
trees comprise the lower tree.

Relax partition tree $\tree_{G_p}$ has leaves at different
depths. Some leaves are subgraphs having a single vertex while others
may have up to $\sizeof{G}/\log^3{\sizeof{G}}$ vertices.
The tree is stored in the same format as a standard decomposition tree.
Thus, leaves of the relax partition tree that are not refined
contain vertices stored in an arbitrary order.
The relax partition tree $\tree_{G_p}$ of $G_p$ is just
 a decomposition tree if there are fewer than $\log^3{\sizeof{G}}$ vertices.

\medskip

\begin{kbalgbox}
\label{alg:relax-balanced-partition}
{\rbp}$(G_p)$
\begin{enumerate}
\item \emph{Build $\tree_{G_p}$} --- Build the relax partition tree
  $\tree_{G_p}$ from $G_p$ recursively.

\item \emph{Build red-blue array} --- Build an array of vertices by an
  in-order traversal of leaves of $T_{G_p}$. Vertices in leaves that
  are not refined are laid out arbitrarily.  Build a red-blue array
  and find a subarray in the red-blue array as described in \fbp.

\item \emph{Modify red-blue array} -- Modify the subarray to satisfy
  the following constraint.  All vertices in an (unrefined) leaf must
  stay together, either within or without the subarray.  If any cut
  separates the vertices, then move the cut leftward or rightward to
  be in between the leaf node and a neighbor. Now partition the
  vertices in $G_{p}$ based on this modified subarray. Put the vertices
  representing blue elements that are in the subarray into set
  $V_{px}$ and put the vertices representing blue elements that are
  outside of the subarray into set $V_{py}$.

\item \emph{Partition $G_{p}$} --- Compute $G_{px}$ and $G_{py}$ from
  $V_{px}$ and $V_{py}$. This computation also means scanning edges to
  determine which edges are internal to $G_{px}$ and $G_{py}$ and
  which have now become external.
\end{enumerate}
\end{kbalgbox}

%\input{relaxbalpart-fig}
%!TEX root =  co-fem-nbody.tex

\begin{figure}[t!]
\begin{center}
\subfigure[An example subgraph $G_p$ of mesh $G$.
Subgraph $G_p$ has eight vertices,  ten edges, and eight outer edges
(i.e., $\sizeof{\outer(G_p) }$ = 8).
]{
\label{fig:samplegraphrbp} 
\begin{minipage}[b]{0.41\textwidth}
\centering
\begin{picture}(0,0)%
\epsfig{file=figures/samplegraph.pstex}%
\end{picture}%
\setlength{\unitlength}{2565sp}%
\begingroup\makeatletter\ifx\SetFigFont\undefined%
\gdef\SetFigFont#1#2#3#4#5{%
  \reset@font\fontsize{#1}{#2pt}%
  \fontfamily{#3}\fontseries{#4}\fontshape{#5}%
  \selectfont}%
\fi\endgroup%
\begin{picture}(3549,2307)(439,-3523)
\put(1576,-2311){\makebox(0,0)[lb]{\smash{{\SetFigFont{8}{9.6}{\rmdefault}{\mddefault}{\updefault}{\color[rgb]{0,0,0}1}%
}}}}
\put(1329,-3091){\makebox(0,0)[lb]{\smash{{\SetFigFont{8}{9.6}{\rmdefault}{\mddefault}{\updefault}{\color[rgb]{0,0,0}6}%
}}}}
\put(2049,-3316){\makebox(0,0)[lb]{\smash{{\SetFigFont{8}{9.6}{\rmdefault}{\mddefault}{\updefault}{\color[rgb]{0,0,0}7}%
}}}}
\put(758,-1636){\makebox(0,0)[lb]{\smash{{\SetFigFont{8}{9.6}{\rmdefault}{\mddefault}{\updefault}{\color[rgb]{0,0,0}5}%
}}}}
\put(1944,-1351){\makebox(0,0)[lb]{\smash{{\SetFigFont{8}{9.6}{\rmdefault}{\mddefault}{\updefault}{\color[rgb]{0,0,0}2}%
}}}}
\put(2439,-2213){\makebox(0,0)[lb]{\smash{{\SetFigFont{8}{9.6}{\rmdefault}{\mddefault}{\updefault}{\color[rgb]{0,0,0}4}%
}}}}
\put(3196,-1426){\makebox(0,0)[lb]{\smash{{\SetFigFont{8}{9.6}{\rmdefault}{\mddefault}{\updefault}{\color[rgb]{0,0,0}3}%
}}}}
\put(3601,-2416){\makebox(0,0)[lb]{\smash{{\SetFigFont{8}{9.6}{\rmdefault}{\mddefault}{\updefault}{\color[rgb]{0,0,0}8}%
}}}}
\end{picture}%
%   END OF FILE figures/samplegraph.pstex_t

\end{minipage}}%
\quad
\subfigure[
A relax partition tree of the subgraph $G_p$ from 
\subref{fig:samplegraphrbp}.
Building this decomposition tree is the first step for {\rbp}($G_p$).
Observe that each edge in $G_p$ is a crossing edge for at most 
one node in the decomposition tree.
Some edges, such as $(2,4)$, are not crossing edges for any node. 
The top three levels of the decomposition tree are the upper tree. 
We refine a leaf of the upper tree if only it has many (at least three)
edges 
from $\outer{(G_p)}$. 
Upper tree leaf $(G_{p})_{00}$ has 4 edges from  $\outer{(G_p)}$.  
Upper tree leaf $(G_{p})_{01}$ has 1 edge  from  $\outer{(G_p)}$.   
Upper tree leaf $(G_{p})_{10}$ has 1 edge  from  $\outer{(G_p)}$. 
Upper tree leaf $(G_{p})_{11}$ has 2 edges  from  $\outer{(G_p)}$.
Thus, only $(G_{p})_{00}$ is further refined. 
]{
\begin{minipage}[b]{0.55\textwidth}
\centering
\begin{picture}(0,0)%
\epsfig{file=figures/RelaxBalPart.pstex}%
\end{picture}%
\setlength{\unitlength}{2368sp}%
\begingroup\makeatletter\ifx\SetFigFont\undefined%
\gdef\SetFigFont#1#2#3#4#5{%
  \reset@font\fontsize{#1}{#2pt}%
  \fontfamily{#3}\fontseries{#4}\fontshape{#5}%
  \selectfont}%
\fi\endgroup%
\begin{picture}(6740,3254)(968,-5103)
\put(4351,-2761){\makebox(0,0)[lb]{\smash{{\SetFigFont{8}{9.6}{\familydefault}{\mddefault}{\updefault}{\color[rgb]{0,0,0}(1,2) (4,7)}%
}}}}
\put(1351,-4861){\makebox(0,0)[lb]{\smash{{\SetFigFont{8}{9.6}{\familydefault}{\mddefault}{\updefault}{\color[rgb]{0,0,0}(1,6)}%
}}}}
\end{picture}%
%   END OF FILE figures/RelaxBalPart.pstex_t

\end{minipage}}
\medskip
\subfigure[
The red-blue array for $G_p$. 
The blue elements have a dark shade. 
The red elements have a light shade.
There is one blue element for each vertex in $G_p$. 
There is one red element for each outgoing edge in $G_p$. 
The figure indicates a subarray containing
half of the blue elements and half of the red elements to within one.
However, this subarray separates element $8$ from element $2$. 
This cut is not allowed because $8$ and $2$ are in the same 
leaf of the relax partition tree. Instead the cut is moved 
left to the first valid position. 
The new cut 
separates element $5$ from element $8$, which is allowed
because $5$ and $8$ are in different leaves 
of the relax partition tree. 
Thus, 
$G_{px}$ will contain vertices
$5$, $6$, and $7$, and
$G_{py}$ will contain vertices
$1$, $2$, $3$, $4$, and $8$.
]{
\label{fig:RelaxBalPart}
\begin{minipage}[b]{0.9\textwidth}
 \begin{center}
\begin{picture}(0,0)%
\epsfig{file=figures/RBPartLayout.pstex}%
\end{picture}%
\setlength{\unitlength}{2960sp}%
\begingroup\makeatletter\ifx\SetFigFont\undefined%
\gdef\SetFigFont#1#2#3#4#5{%
  \reset@font\fontsize{#1}{#2pt}%
  \fontfamily{#3}\fontseries{#4}\fontshape{#5}%
  \selectfont}%
\fi\endgroup%
\begin{picture}(9316,924)(668,-8848)
\put(2573,-8461){\makebox(0,0)[lb]{\smash{{\SetFigFont{11}{13.2}{\familydefault}{\mddefault}{\updefault}{\color[rgb]{0,0,0}6}%
}}}}
\put(773,-8461){\makebox(0,0)[lb]{\smash{{\SetFigFont{11}{13.2}{\familydefault}{\mddefault}{\updefault}{\color[rgb]{0,0,0}1}%
}}}}
\put(8581,-8461){\makebox(0,0)[lb]{\smash{{\SetFigFont{11}{13.2}{\familydefault}{\mddefault}{\updefault}{\color[rgb]{0,0,0}3}%
}}}}
\put(4381,-8461){\makebox(0,0)[lb]{\smash{{\SetFigFont{11}{13.2}{\familydefault}{\mddefault}{\updefault}{\color[rgb]{0,0,0}7}%
}}}}
\put(5573,-8461){\makebox(0,0)[lb]{\smash{{\SetFigFont{11}{13.2}{\familydefault}{\mddefault}{\updefault}{\color[rgb]{0,0,0}5}%
}}}}
\end{picture}%
%   END OF FILE figures/RBPartLayout.pstex_t

 \end{center}
\end{minipage}}
\caption{The steps of the algorithm {\rbp}($G_p$) run on a sample graph.}
\label{fig:RBPartLayout}
\end{center}
\end{figure}
%   END OF FILE relaxbalpart-fig.tex

We first establish the running time of {\rbp}$(G_p)$.
\begin{lemma}%The \rb partition algorithm is
             %cache-oblivious
  Given a subgraph $G_p$ of a well shaped mesh $G$, $\sizeof{G_p} \geq
  \log^3\sizeof{G}$, $\rbp(G_{p})$ runs in time
  $\bigO{\sizeof{G_p}\log\log{\sizeof{G}}}$ on a RAM and
  $\bigO{1+(\sizeof{G_p}/B)\min\{\log\log{\sizeof{G}},
    \log(\sizeof{G_p}/M)\}}$ memory transfers in the DAM and
  cache-oblivious models in expectation.  With high probability, it
  runs in $\bigO{\sizeof{G_p}\log{\sizeof{G_p}}}$ on a RAM and
  $\bigO{1+\sizeof{G_p}\log{\sizeof{G_p}}/B}$ memory transfers in the
  DAM and cache-oblivious models.
\lemlabel{RelaxBalancedPartitionCO}
\end{lemma}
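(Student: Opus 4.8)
The plan is to charge the cost of $\rbp(G_p)$ to three disjoint pieces: (i) building the upper tree of $\tree_{G_p}$, (ii) building the lower tree (the decomposition trees hung off the heavily loaded upper-tree leaves), and (iii) Steps~2--4, which amount to a constant number of linear scans of arrays of size $\bigO{\sizeof{G_p}}$ together with the subarray search of \lemref{NecklaceBisection}. Piece~(iii) costs $\bigO{\sizeof{G_p}}$ time and $\bigO{1+\sizeof{G_p}/B}$ memory transfers and is dominated, so the argument reduces to bounding (i) and (ii).

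For piece~(i), the upper tree is exactly the top $3\log_{1/\beta}\log\sizeof{G}=\Theta(\log\log\sizeof{G})$ levels of a decomposition tree of $G_p$, so I would reuse the recursive accounting of \thmref{mesh-decomposition-tree}. Every level has total size $\bigO{\sizeof{G_p}}$ and each separator runs in expected linear time, so on a RAM the expected cost is $\bigO{\sizeof{G_p}}$ per level across $\Theta(\log\log\sizeof{G})$ levels, i.e.\ $\bigO{\sizeof{G_p}\log\log\sizeof{G}}$. For memory transfers I would invoke the same ``a subtree a constant factor smaller than $M$ is built for free'' observation: a level contributes $\bigO{1+\sizeof{G_p}/B}$ transfers only while its nodes still exceed $M$. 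The number of paid levels is therefore the minimum of the upper-tree depth $\Theta(\log\log\sizeof{G})$ and the number $\Theta(\log(\sizeof{G_p}/M))$ of levels needed to shrink below $M$, which produces exactly the factor $\min\{\log\log\sizeof{G},\log(\sizeof{G_p}/M)\}$ in the claimed expected transfer bound. Reconciling these two depth bounds is the one place where the bounded depth of the upper tree genuinely matters, and I expect it to be the part that requires the most care.

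Piece~(ii) is where I would invoke the counting argument already sketched before the algorithm: at most $\log^2\sizeof{G}$ upper-tree leaves carry more than $\sizeof{\outer(G_p)}/\log^2\sizeof{G}$ outer edges, so only those $\le\log^2\sizeof{G}$ leaves are refined, and since each upper-tree leaf has at most $\sizeof{G_p}/\log^3\sizeof{G}$ vertices, the total size of all refined leaves is at most $\log^2\sizeof{G}\cdot\sizeof{G_p}/\log^3\sizeof{G}=\sizeof{G_p}/\log\sizeof{G}$. Applying \thmref{mesh-decomposition-tree} to each refined leaf $\ell$ of size $s_\ell$ costs $\bigO{s_\ell\log s_\ell}$ time and $\bigO{1+(s_\ell/B)\log(s_\ell/M)}$ transfers in expectation; bounding $\log s_\ell\le\log\sizeof{G}$ and using $\sum_\ell s_\ell\le\sizeof{G_p}/\log\sizeof{G}$ (with the refined leaves lying in contiguous chunks of the $G_p$ array, so the per-leaf startup does not accumulate) gives a total of $\bigO{\sizeof{G_p}}$ time and $\bigO{1+\sizeof{G_p}/B}$ transfers, both dominated by piece~(i). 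Summing the three pieces establishes the expected bounds.

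For the high-probability bounds I would mirror the grouping argument of \thmref{mesh-decomposition-tree} and \thmref{fully-balanced-tree}, classifying all separator computations (over both the upper tree and the lower forest) into size classes within a constant factor. A class of nodes of size $\approx\sizeof{G_p}/2^j$ contains $\Theta(2^j)$ nodes, so every class with $j\ge\log\log\sizeof{G_p}$ already holds $\Omega(\log\sizeof{G_p})$ nodes; for these \lemref{lognlemma} forces the high-probability cost to match the expected cost computed above. The remaining larger nodes (size exceeding $\sizeof{G_p}/\log\sizeof{G_p}$) number only $\sum_{j<\log\log\sizeof{G_p}}\Theta(2^j)=\Theta(\log\sizeof{G_p})$, so finding good separators for all of them is a coin-flipping game needing $\Theta(\log\sizeof{G_p})$ heads; by \lemref{lognlemma} this requires $\Theta(\log\sizeof{G_p})$ trials with probability $1-1/\poly(\sizeof{G_p})$, and charging each trial to the largest (root) scan of cost $\bigO{\sizeof{G_p}}$ time and $\bigO{1+\sizeof{G_p}/B}$ transfers yields the dominant $\bigO{\sizeof{G_p}\log\sizeof{G_p}}$ time and $\bigO{1+\sizeof{G_p}\log\sizeof{G_p}/B}$ transfer terms. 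Here the hypothesis $\sizeof{G_p}\ge\log^3\sizeof{G}$ is what guarantees $\log\sizeof{G_p}=\Omega(\log\log\sizeof{G})$, so that the refined leaves are genuinely small and the grouping is valid.
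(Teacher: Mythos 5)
Your proposal is correct and follows essentially the same route as the paper's proof: the same three-way split into upper tree, lower forest, and linear-scan steps; the same $\min\{\log\log\sizeof{G},\log(\sizeof{G_p}/M)\}$ argument via the two competing depth bounds; the same $\log^2\sizeof{G}\cdot\sizeof{G_p}/\log^3\sizeof{G}$ counting for the refined leaves; and the same coin-flipping-plus-grouping argument for the high-probability bounds. The only (immaterial) divergence is that the paper bounds the lower forest's high-probability cost by a crude $O(\log\sizeof{G_p})$ blow-up of its $O(\sizeof{G_p})$ expected cost rather than folding its nodes into the size classes as you do; both are dominated by the upper-tree cost.
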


\begin{proof}
%% RAM expectation
We establish that the construction algorithm runs in expected time
$\bigO{\graphsize{G_p}\log\log\graphsize{G}}$ on a RAM.  The upper
tree of $\tree_{G_p}$ takes expected time
$O(\sizeof{G_p}\log\log{\sizeof{G}})$.  There are at most
$\log^2{\sizeof{G}}$ leaves of the upper tree to be refined.  For
each of these leaves, we build a decomposition tree, and this takes
expected time
$$
O((\sizeof{G_p}/\log^3{\sizeof{G}})\log(\sizeof{G_p}/\log^3{\sizeof{G}}))
\leq
O(\sizeof{G_p}/\log^2{\sizeof{G}}).
\label{eq:log-cube-sum}
$$
Thus, the total expected time to refine all leaves is $O(\sizeof{G})$.
Steps 2-4 takes linear time.  Thus, {\rbp}$(G_p)$ finds a \flb
partition in expected time $\bigO{\sizeof{G_p}\log\log{\sizeof{G}}}$.

%% CO expectation
We next establish that the construction algorithm uses
$\bigO{1+(\graphsize{G_p}/B)\min\{\log\log{\sizeof{G}},
  \log(\sizeof{G_p}/M)\}}$ expected memory transfers in the DAM and
cache-oblivious models. There are two cases. The first case is when
$M\geq\sizeof{G_p}/\log^3{\sizeof{G}}$.  Then some of nodes in the top
$3\log_{1/\beta}\log{\sizeof{G}}$ levels of the $\tree_{G_p}$ may be a
constant fraction smaller than $M$. Such small nodes require no memory
transfers to build, because they are already stored in memory.  Only
the top $O(\log(\sizeof{G_p}/M))$ levels use memory transfers. The
rest of the decompositions are free of memory transfers because all
necessary memory blocks already reside in memory. When a subgraph
$G_{p}$ has size $\Omega(M)$, then the partition of a subgraph takes
expected $\Theta(\sizeof{G_p}/B)$ memory transfers, because this is
the cost of a linear scan. Hence, the total cost is
$O(1+(\sizeof{G_p}/B)\log(\sizeof{G_p}/M))$.

The second case is when $M<\sizeof{G_p}/\log^3{\sizeof{G}}$. Then, the
upper tree of $\tree_{G_p}$ takes
$O(1+\sizeof{G_p}\log\log{\sizeof{G}}/B)$ memory transfers in
expectation.  There are at most $\log^2{\sizeof{G}}$ leaves of the
upper tree of $\tree_{G_p}$ that need further refinement, and the leaf
sizes are at most $\sizeof{G_p}/\log^3{\sizeof{G}}$.  Building a
decomposition tree on one of these leaves takes
$$
\bigO{1+(\sizeof{G_p}/\log^3{\sizeof{G}})
\log(\sizeof{G_p}/\log^3{\sizeof{G}})/B}
\leq \bigO{1+\sizeof{G_p}/B \log^2{\sizeof{G}}}
$$
memory transfers in expectation.
Since there are at most  $\log^2{\sizeof{G}}$
leaves, the total expected number of memory transfers to construct
the lower tree of  $\tree_{G_p}$ is
$\bigO{\sizeof{G_p}/B}$, which is dominated by the cost to build the
upper tree.

Combining the two cases, we obtain that the expected number of memory
transfers to build $\tree_{G_p}$ is
$O(1+(\sizeof{G_p}/B)\min\{\log\log{G}, \log(\sizeof{G_p}/M)\})$.

% RAM and CO, w.h.p.,
We next establish the high-probability bounds. We first consider all
nodes that have size $\Omega(\sizeof{G_p}/\log\sizeof{G_p})$.  There
are $\bigO{\log{\sizeof{G_p}}}$ such nodes. Building these nodes uses
time $O(\sizeof{G_p}\log{\sizeof{G_p}})$ and
$O(1+\sizeof{G_p}\log{\sizeof{G_p}}/B)$ memory transfers with high
probability by~\thmref{mesh-decomposition-tree}.

For the rest of the \emph{upper tree} of $\tree_{G_p}$, each level
contains $\Omega(\log \sizeof{G_p})$ nodes. Thus, the number of
memory transfers with high probability matches the number of
memory transfers in expectation, which is
$O(1+(\sizeof{G_p}/B)\min\{\log\log{G}, \log(\sizeof{G_p}/M)\})$.
The cost to build the rest of the upper tree
is dominated by the cost to build the largest
$\bigO{\log{\sizeof{G_p}}}$ nodes in the upper tree.

As described above, the expected cost to build the lower tree is
$\bigO{\sizeof{G_p}}$ time and $\bigO{\sizeof{G_p}/B}$ memory
transfers. The high-probability bounds are at most a
$O(\log\sizeof{G_p})$ factor greater and hence are dominated by the
cost to build the upper tree. Thus, we establish the high probability
bounds on time and memory transfers.
\end{proof}

We next establish the correctness of $\rbp(G_{p})$. In the following,
let $b$ represent the maximum degree of mesh $G$.

\begin{lemma}%The \rb partition algorithm gives a quality
             %partitioning.
Given a well shaped mesh $G$ and a subgraph $G_p\subseteq G$, {\rbp}$(G_p)$
generates a \rb  partition of $G_p$.
\lemlabel{RelaxBalancedPartitionGood}
\end{lemma}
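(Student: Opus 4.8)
The plan is to verify the three defining conditions of a \rb partition one at a time, borrowing the skeleton of the proof of \lemref{FullyBalancedPartitionGood} for \fbp and isolating the places where the relaxation---keeping every unrefined leaf of $\tree_{G_p}$ intact---introduces the extra slack. I work in the interesting regime $\sizeof{G_p}\geq\log^3\sizeof{G}$; when $\sizeof{G_p}<\log^3\sizeof{G}$ the relax partition tree is an ordinary decomposition tree, no leaf is ever left unrefined, and the statement follows directly from \lemref{FullyBalancedPartitionGood}, whose $\pm O(1)$ guarantees are stronger than the relaxed bounds demanded here.

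First I would bound $\sizeof{\crossing(G_p)}$. The subarray produced in Step~2 by \lemref{NecklaceBisection}, together with its Step~3 adjustment, makes two cuts, each between two consecutive leaves of $\tree_{G_p}$. For one cut I take the smallest subtree $P$ containing the two separated leaves and charge the removed edges to the crossing edges of the nodes on the upward path from $P$ to the root $G_p$; since each child has size at most $\beta$ times its parent, the node sizes along this path form a geometric progression and the total is $O(\sizeof{G_p}^{\alpha})$, exactly the estimate \eqref{num-edges-crossing-one-cut} used for \fbp. Doubling for the two cuts and subtracting the once-counted $\crossing(G_p)$ yields $\sizeof{\crossing(G_p)}\leq f(\sizeof{G_p})$. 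Relocating a cut to a leaf boundary in Step~3 only raises the least common ancestor, which shortens the path, so this bound is untouched.

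Next I would dispatch the vertex-balance condition by a perturbation argument. \lemref{NecklaceBisection} already splits the blue elements to within one, and the sole further change is the Step~3 movement of each cut across at most one unrefined leaf. An unrefined leaf of the upper tree holds at most $\sizeof{G_p}/\log^3\sizeof{G}$ vertices by construction, so each of the two cuts shifts the blue count on either side by $\bigO{\sizeof{G_p}/\log^3\sizeof{G}}$; hence $\sizeof{V_{px}}=\sizeof{V_{py}}\pm\bigO{\sizeof{G_p}/\log^3\sizeof{G}}$, and since $G$ has constant degree the same bound holds for the subgraph sizes $\sizeof{G_{px}}$ and $\sizeof{G_{py}}$ up to lower-order terms.

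The step I expect to be the main obstacle is the outgoing-edge condition, since it is precisely here that the refinement rule must be invoked. By construction, every leaf of the upper tree that is \emph{not} refined carries at most $\sizeof{\outer(G_p)}/\log^2\sizeof{G}$ edges of $\outer(G_p)$---leaves exceeding this threshold are exactly the ones we decompose further---so the red elements of an unrefined leaf number at most $\sizeof{\outer(G_p)}/\log^2\sizeof{G}$. Thus sliding each of the two cuts across an unrefined leaf in Step~3 perturbs the red-element count on either side by at most $\sizeof{\outer(G_p)}/\log^2\sizeof{G}$, while the $\pm b$ correction for red elements straddling a cut (inherited verbatim from the \fbp analysis, with $b$ the degree bound) is a lower-order term. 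This gives $\sizeof{\outgoing(G_{px})}=\sizeof{\outgoing(G_{py})}\pm\bigO{\sizeof{\outer(G_p)}/\log^2\sizeof{G}}$, and because the two sides are balanced we have $\sizeof{\outgoing(G_{py})}=\Theta(\sizeof{\outer(G_p)})$, so the right-hand slack is $\bigO{\sizeof{\outgoing(G_{py})}/\log^2\sizeof{G}}$ as required. The one delicate point to confirm is that Step~3 can always relocate a cut to a genuine leaf boundary without being forced to split a refined leaf---whose single-vertex sub-leaves already provide legal cut positions---so that the vertex-balance and outgoing-edge guarantees never conflict.
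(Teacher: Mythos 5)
Your proof is correct and follows essentially the same route as the paper's (much terser) argument: the crossing-edge bound is inherited from the two-cut geometric-series estimate of \lemref{FullyBalancedPartitionGood}, and the two balance conditions follow because moving each cut to an unrefined-leaf boundary perturbs the blue count by at most the leaf size $\sizeof{G_p}/\log^3\sizeof{G}$ and the red count by at most the leaf's outer-edge budget $\sizeof{\outer(G_p)}/\log^2\sizeof{G}$, which the refinement rule guarantees. Your write-up actually supplies more of the perturbation bookkeeping than the paper does, but there is no difference in approach.
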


\begin{proof}
By the way we construct the relax partition tree $\tree_{G_p}$,
nodes that are not refined contain
$O(\sizeof{\outer(G_p)}/\log^2{\sizeof{G}})$ outer edges of $G_p$,
and their sizes differ by $O(\sizeof{G_p}/\log^3{\sizeof{G}})$.
 Thus, by the way we generate $G_{px}$ and
$G_{py}$, the number of outgoing edges of $G_{px}$ and $G_{py}$
differ by $O(\sizeof{\outer{(G_p)}}/\log^2{\sizeof{G}})$ and
$\sizeof{G_{px}}$ and $\sizeof{G_{py}}$ differ by
$O(\sizeof{G_p}/\log^3{\sizeof{G}})$. Recall that
$\outgoing{(G_{px})} \cup \outgoing{(G_{py})} = \outer{(G_{p})}$.
Thus, we have
$$\sizeof{\outgoing{(G_{px})}} =
\sizeof{\outgoing{(G_{py})}} \pm
O(\sizeof{\outgoing{(G_{py})}}/\log^2{\sizeof{G}}).
$$

As shown in Equation~\eqref{num-edges-crossing-both-cuts} from
\lemref{FullyBalancedPartitionGood}, the number of crossing edges
satisfies $ \sizeof{\crossing(G_{p})}\leq f(\sizeof{G_{p}}). $
\end{proof}

\subsection*{\RB Decomposition Trees}

A \defn{\flb decomposition tree} of a well shaped mesh $G$
is a decomposition tree of $G$ where every partition
of every node $G_p$  in the tree is \flb.

We construct a \flb decomposition tree of $G$ recursively.
First we apply the algorithm \rbp on the root $G$ to get
the left and right children, $G_0$ and $G_1$.
We next recursively build the (left) subtree rooted at $G_0$
and then the (right) subtree rooted at $G_1$.

\begin{theorem}[\protect\FLB Decomposition Tree for a Mesh]%
  \thmlabel{flb-tree}
A \flb decomposition tree of a well shaped mesh $G$ of constant dimension
can be computed in time
$\bigO{\graphsize{G}\log\graphsize{G}\log{\log{\graphsize{G}}}}$
on a RAM both in expectation and with high probability.
The \flb decomposition tree can be computed in the DAM and cache-oblivious
models using
$\bigO{1+(\graphsize{G}/B)\log{(\graphsize{G}/M)}\min\{\log\log{\sizeof{G}},\log(\sizeof{G}/M)\} }$
memory transfers in expectation and
$\bigO{1+(\graphsize{G}/B)(\log{(\graphsize{G}/M)}\min\{\log\log{\sizeof{G}},\log(\sizeof{G}/M) \} + \log{\graphsize{G}})}$
memory transfers with high probability.
\end{theorem}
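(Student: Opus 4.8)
The plan is to mirror the proof of \thmref{fully-balanced-tree}, replacing the per-node cost of \fbp with the cheaper per-node cost of \rbp supplied by \lemref{RelaxBalancedPartitionCO}, and summing over the tree. Correctness comes for free: since the tree is built by recursively applying \rbp, \lemref{RelaxBalancedPartitionGood} guarantees that every node's partition is \flb, so by definition the result is a \flb decomposition tree. Before summing costs I would first verify that the tree still has $\Theta(\log\graphsize{G})$ levels. Each \rb split produces children of size $(\graphsize{G_p}/2)(1\pm\bigO{1/\log^3\graphsize{G}})$, so along any root-to-leaf path of length at most $\log\graphsize{G}$ the accumulated imbalance factor is $(1\pm\bigO{1/\log^3\graphsize{G}})^{\log\graphsize{G}}=1\pm\bigO{1/\log^2\graphsize{G}}$, which is negligible; hence the tree is essentially balanced, and each level $i$ contains $\Theta(2^i)$ nodes of size $\Theta(\graphsize{G}/2^i)$.

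For the expected RAM time, I would charge each internal node $G_p$ its \rbp cost of $\bigO{\graphsize{G_p}\log\log\graphsize{G}}$ from \lemref{RelaxBalancedPartitionCO}. This same bound also covers the nodes with $\graphsize{G_p}<\log^3\graphsize{G}$, where \rbp reduces to building an ordinary decomposition tree at cost $\bigO{\graphsize{G_p}\log\graphsize{G_p}}=\bigO{\graphsize{G_p}\log\log\graphsize{G}}$ since $\log\graphsize{G_p}=\bigO{\log\log\graphsize{G}}$. Because the total node size per level is $\Theta(\graphsize{G})$ and there are $\Theta(\log\graphsize{G})$ levels, linearity of expectation gives $\bigO{\graphsize{G}\log\graphsize{G}\log\log\graphsize{G}}$ expected time.

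The memory-transfer expectation is the step requiring care, since the $\min$ in the target bound must emerge from summing the per-node $\min$ of \lemref{RelaxBalancedPartitionCO} across levels. As in \thmref{fully-balanced-tree}, any subtree that first shrinks below $M$ is built for free, so only the $\bigO{\log(\graphsize{G}/M)}$ levels with node size exceeding $M$ contribute. At level $i$ the node size is $s_i=\Theta(\graphsize{G}/2^i)$, contributing $\bigO{(\graphsize{G}/B)\min\{\log\log\graphsize{G},\log(s_i/M)\}}$ transfers in expectation (the additive $+1$ terms sum to $\bigO{\graphsize{G}/M}=\bigO{\graphsize{G}/B}$ and are absorbed). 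I would then bound $\sum_i\min\{\log\log\graphsize{G},\log(s_i/M)\}$ in two regimes: when $\log\log\graphsize{G}\ge\log(\graphsize{G}/M)$ the min is always the decreasing term $\log(s_i/M)$, which forms an arithmetic series summing to $\bigO{\log^2(\graphsize{G}/M)}$; otherwise the min is capped at $\log\log\graphsize{G}$ over all $\bigO{\log(\graphsize{G}/M)}$ contributing levels, giving $\bigO{\log(\graphsize{G}/M)\log\log\graphsize{G}}$. The two regimes combine to the claimed $\bigO{1+(\graphsize{G}/B)\log(\graphsize{G}/M)\min\{\log\log\graphsize{G},\log(\graphsize{G}/M)\}}$.

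Finally, for the high-probability bounds I would follow the grouping argument of \thmref{fully-balanced-tree}. The $\Theta(\log\graphsize{G})$ largest nodes are too few to concentrate, so I bound them by the crude high-probability cost of \lemref{RelaxBalancedPartitionCO}; this is dominated by the root and contributes $\bigO{\graphsize{G}\log\graphsize{G}}$ time and $\bigO{1+\graphsize{G}\log\graphsize{G}/B}$ transfers, the source of the additive $\log\graphsize{G}$ term. For each remaining size group I would note that it contains $\Omega(\log\graphsize{G})$ nodes, so \lemref{lognlemma} forces the high-probability cost to match the expectation computed above. Summing, the high-probability time is $\bigO{\graphsize{G}\log\graphsize{G}\log\log\graphsize{G}}$ and the high-probability transfer count is the expected bound plus the $\bigO{(\graphsize{G}/B)\log\graphsize{G}}$ root term, as claimed. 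I expect the main obstacle to be the memory-transfer summation: coaxing the per-level $\min$ terms into the single $\min$ expression while correctly absorbing the base-case cutoff at $M$ and the additive per-node terms.
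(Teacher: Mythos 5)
Your correctness argument, your expected RAM-time analysis, and your expected memory-transfer analysis all match the paper's proof in substance (your two-regime summation of the per-level $\min$ terms is a slightly sharper version of the paper's ``number of levels times per-level cost'' bound, and your explicit verification that the tree still has $\Theta(\log\graphsize{G})$ levels is the content of \lemref{new-tree-prop}, which the paper states separately). The gap is in the high-probability analysis of the largest $\Theta(\log\graphsize{G})$ nodes.

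You propose to charge each of these nodes its individual high-probability \rbp cost from \lemref{RelaxBalancedPartitionCO}, namely $\bigO{\graphsize{G_p}\log\graphsize{G_p}}$ time and $\bigO{1+\graphsize{G_p}\log\graphsize{G_p}/B}$ transfers, and assert the sum ``is dominated by the root.'' It is not: these nodes occupy $\Theta(\log\log\graphsize{G})$ levels, each of total size $\Theta(\graphsize{G})$, so the naive sum is $\Theta(\graphsize{G}\log\graphsize{G}\log\log\graphsize{G})$, not $\bigO{\graphsize{G}\log\graphsize{G}}$. For RAM time this excess happens to still fit inside the claimed bound, but for memory transfers it gives $\bigO{\graphsize{G}\log\graphsize{G}\log\log\graphsize{G}/B}$, which exceeds the claimed additive term $\bigO{(\graphsize{G}/B)\log\graphsize{G}}$ whenever $\log(\graphsize{G}/M)$ is small (e.g., $M=\graphsize{G}/\polylog\graphsize{G}$). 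The ``bound everything by the root'' trick from \thmref{mesh-decomposition-tree} works there because each node costs one separator trial, and the $\Theta(\log\graphsize{G})$ trials needed globally (by \lemref{lognlemma}) can each be charged the root's per-trial cost; a call to \rbp is not a single trial but an entire recursive relax-partition-tree construction, so the trick does not transfer. The paper closes this gap by opening up the relax partition trees of the largest $\Theta(\log\graphsize{G})$ nodes: it collects the nodes of all their upper trees, applies the root-charging/coin-flip argument only to the largest $\Theta(\log\graphsize{G})$ nodes among those (getting $\bigO{\graphsize{G}\log\graphsize{G}}$ time and $\bigO{1+\graphsize{G}\log\graphsize{G}/B}$ transfers), lets the remaining upper-tree nodes concentrate group by group, and handles the lower trees with a separate two-case argument (either $\Omega(\log\graphsize{G})$ leaves are refined and concentration applies, or $\bigO{\log\graphsize{G}}$ leaves are refined and each is cheap enough that even a $\log\graphsize{G}$-factor loss is affordable). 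You need some version of this finer-grained decomposition; the per-node union bound alone does not yield the stated transfer bound.
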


\begin{proof}
  When $\sizeof{G} \leq M$, the construction algorithm takes
  $O(\sizeof{G})$ time and $O(\sizeof{G}/B)$ memory transfers, both in
  expectation and with high probability. We consider 
  $O(\sizeof{G})=\Omega(M)$ in the following analysis.

%% RAM expectation
  We first analyze the expected running time of the algorithm on a
  RAM.  The construction time of each node $G_p$ is
  $\bigO{\sizeof{G_p}\log\log\sizeof{G}}$, and there are
  $\bigO{\log{\sizeof{G}}}$ levels in the \rb decomposition tree.
  Thus, by linearity of expectation, the expected running time is
  $\bigO{\sizeof{G}\log\sizeof{G}\log\log{\sizeof{G}}}$.

%% CO, expectation
  We show that the construction algorithm uses
  $\bigO{1+(\sizeof{G}/B)\log(\sizeof{G}/M)\min\{\log\log{\sizeof{G}},
    \log(\sizeof{G}/M)\}}$ memory transfers in the DAM and
  cache-oblivious models.  We analyze large and small nodes in the \rb
  decomposition tree differently.  There are two cases.  The first
  case is when a tree node $G_{p}$ is large, i.e., $\sizeof{G_p}\geq
  \log^3{\sizeof{G}}$.  In this case, {\rbp}$(G_p)$ uses expected
  $\bigO{1+(\sizeof{G_p}/B)\min\{\log\log{\sizeof{G}},
    \log(\sizeof{G_p}/M)\}}$ memory transfers
  by~\lemref{RelaxBalancedPartitionCO}.  Since all nodes a constant
  factor smaller than $M$ can be constructed with no memory transfers,
  we only need consider nodes of size $\Omega{(M)}$.  There are
  $\bigO{\log(\sizeof{G}/M)}$ levels of nodes of size $\Omega{(M)}$.
So the construction of all nodes larger than  $\log^3\sizeof{G}$  takes
$\bigO{1+(\sizeof{G}/B)\log(\sizeof{G}/M)\min\{\log\log{\sizeof{G}},
  \log(\sizeof{G}/M)\}}$ expected memory transfers.

The second case is when $\sizeof{G_p}< \log^3{\sizeof{G}}$.  In this
case, we build a complete decomposition tree at each node.  Therefore
by~\lemref{FullyBalancedPartitionCO}, the cost to build one of these
nodes is $\bigO{1+(\graphsize{G_p}/B)\log{(\graphsize{G_p}/M)}}$ in
expectation.  As before, nodes a constant factor smaller than $M$ can
be constructed with no memory transfers.  Therefore, the number of
levels containing nodes of size between $\Omega(M)$ and less than
$\log^3{\sizeof{G}}$ is at most $\bigO{\log{(\log^3{\sizeof{G}}/M)}}$.
Thus, the construction of all nodes of size
$\bigO{\log^3{\sizeof{G}}}$ uses
$O(1+(\sizeof{G}/B)\log^2(\log^3{\sizeof{G}}/M))$ memory transfers in
expectation, which is dominated by the first case.

%% RAM & CO w.h.p.
Now we establish the high probability bounds.  We analyze the largest
$\Theta(\log{\sizeof{G}})$ nodes and the remaining nodes of the \rb
decomposition tree separately.  Any level of the \rb decomposition
tree below the largest $\Theta(\log{\sizeof{G}})$ nodes has
$\Omega(\log{\sizeof{G}})$ nodes.  Hence, for each level, the
construction cost with high probability matches the construction cost
in expectation, which is $\bigO{\sizeof{G}\log\log{\sizeof{G}}}$
expected time and $\bigO{1+(\sizeof{G}/B)\min\{\log\log{\sizeof{G}},
  \log(\sizeof{G}/M)\}}$ expected memory transfers.  Since the
construction algorithm is recursive, a \rb partition of nodes a
constant fraction smaller than $M$ uses no memory transfers.  Hence,
all levels of the \rb decomposition tree other than the largest
$\Theta(\log{\sizeof{G}})$ nodes can be constructed in
$\bigO{\sizeof{G}\log\sizeof{G}\log\log{\sizeof{G}}}$ time in a RAM
and $\bigO{1+(\sizeof{G}/B)
  \log(\sizeof{G}/M)\min\{\log\log{\sizeof{G}}, \log(\sizeof{G}/M)\}}$
memory transfers with high probability.

For the largest $\Theta(\log{\sizeof{G}})$ nodes of the \rb decomposition 
tree, we establish the
high probability bounds using a different approach.  Similar to the
proof of~\thmref{fully-balanced-tree}, we examine each relax partition
tree that is used to build each node of the relax-balanced
decomposition tree, and we examine all nodes within all of these relax
partition trees.  However, now there are upper trees and lower trees;
we examine the nodes within upper and lower trees separately.

We look at the upper trees of the relax partition trees of the largest
$\Theta(\log\sizeof{G})$ nodes of the \rb decomposition tree.
There are $\Theta(\log\sizeof{G})$ upper trees, which
are complete binary trees.
Following a similar analysis
to that in the proof of~\thmref{fully-balanced-tree}, the
construction of  the largest $\Theta(\log\sizeof{G})$  nodes from among the
$\Theta(\log\sizeof{G})$ upper trees takes
$O(\sizeof{G}\log{\sizeof{G}})$ time and uses
$O(1+\sizeof{G}\log{\sizeof{G}}/B)$ memory transfers with high
probability.

For the rest of the nodes in the upper trees, the high probability
bounds match the expectation bounds, both in time and memory transfers
by~\thmref{fully-balanced-tree}.  Therefore building the nodes in the
rest of the upper trees takes $O(\sizeof{G}\log^2\log{\sizeof{G}})$
time and uses $O(\sizeof{G}\log^2\log{\sizeof{G}}/B)$ memory transfers
with high probability.  This cost is dominated by the construction
cost of the largest $\Theta(\log{\sizeof{G}})$ nodes of the upper
trees.

We now focus on the lower trees of the relax partition trees of the
largest $\Theta(\log\sizeof{G})$ nodes of the \rb decomposition tree.
We show that the cost to build all of the lower trees takes time
$O(\sizeof{G}\log{\sizeof{G}})$ and uses
$O(1+\sizeof{G}\log{\sizeof{G}}/B)$ memory transfers with high
probability (i.e., probability $1- 1/\poly(\sizeof{G})$).  With high
probability, the lower tree of a partition tree $\tree_{G_p}$ of a
subgraph $G_p$ can be computed in $O(\sizeof{G_p})$ on a RAM and with
$O(\sizeof{G_p}/B)$ memory transfers in the DAM and the
cache-oblivious models.  Given a node $G_p$ and its relax partition
tree $\tree_{G_p}$, there are two cases.  The first case is when there
are $\Omega(\log{\sizeof{G}})$ leaves of the upper tree of
$\tree_{G_p}$ that need to be refined.  Thus, with high probability,
the construction cost of the lower tree of $\tree_{G_p}$ matches the
expected construction cost, which is in $O(\sizeof{G_p})$ time and
$O(\sizeof{G_p}/B)$ memory transfers, as analyzed
in~\lemref{RelaxBalancedPartitionCO}.

The second case is when there are $O(\log{\sizeof{G}})$ leaves of the
upper tree of $\tree_{G_p}$ that need to be refined.  The construction
cost of a single leaf is $O(\sizeof{G_p}/\log^2{\sizeof{G}})$ time and
$O(\sizeof{G_p}/B\log^2{\sizeof{G}})$ memory transfers in expectation.
Thus, the construction cost to refine a single leaf with high
probability is $O(\sizeof{G_p}/\log{\sizeof{G}})$ time and
$O(\sizeof{G_p}/B\log{\sizeof{G}})$ memory transfers and the
construction cost to refine all leaves with high probability is
$O(\sizeof{G_p})$ time and $O(\sizeof{G_p}/B)$ memory transfers. Thus, all
lower trees of the relax partition trees of the largest
$\Theta(\log\sizeof{G})$ nodes of the \rb decomposition tree can be
constructed in $O(\sizeof{G})$ time and $O(\sizeof{G}/B)$ memory
transfers with high probability, which is dominated by the
construction of all upper trees.

Hence, with high probability, the construction algorithm runs in
$\bigO{\sizeof{G}\log\sizeof{G}\log\log{\sizeof{G}}}$ time on a RAM
and uses $\bigO{1 +
  (\sizeof{G}/B)(\log(\sizeof{G}/M)\min\{\log\log{\sizeof{G}},\log(\sizeof{G}/M)\}+
  \log\sizeof{G})}$ memory transfers in the DAM and the
cache-oblivious models.
\end{proof}

We now show that a \flb decomposition tree can serve the same purpose
as a fully-balanced decomposition tree in giving cache-oblivious
layout. The crucial property is the following.
\begin{lemma}%
\lemlabel{new-tree-prop} Given a \flb decomposition tree of
graph $G$, all nodes on any level of the \flb decomposition
tree contain the same number of vertices to within an $o(1)$
factor and all outgoing degrees are the same size to within an
$o(1)$ factor.
\end{lemma}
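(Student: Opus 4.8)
The plan is to prove both halves of the lemma with a single device: for a node $u$ sitting at level $j$ of the \rb decomposition tree, track the ratio between its relevant quantity and the ``ideal'' value for that level, and show that the per-split guarantees of a \rb partition let this ratio drift by only a $1\pm o(1)$ factor over all the $\bigO{\log\sizeof{G}}$ levels. Write $N=\sizeof{G}$. Since every split is a \rb partition, dividing a node $G_p$ into $G_{p0},G_{p1}$ gives $\sizeof{V_{p0}}+\sizeof{V_{p1}}=\sizeof{V_p}$ with $\sizeof{\sizeof{V_{p0}}-\sizeof{V_{p1}}}=\bigO{\sizeof{V_p}/\log^3 N}$, and $\sizeof{\outgoing(G_{p0})}+\sizeof{\outgoing(G_{p1})}=\sizeof{\outer(G_p)}$ with $\sizeof{\outgoing(G_{p0})}=\sizeof{\outgoing(G_{p1})}\pm\bigO{\sizeof{\outgoing(G_{p1})}/\log^2 N}$. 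The first line controls vertex counts, the second the outgoing degrees.

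For the vertex counts I would define $\rho_u=\sizeof{V_u}/(N/2^{j})$ for $u$ at level $j$. The size facts give each child $\tfrac12\sizeof{V_p}(1\pm\bigO{1/\log^3 N})$, so $\rho$ obeys the multiplicative recurrence $\rho_{\mathrm{child}}=\rho_{\mathrm{parent}}(1\pm\bigO{1/\log^3 N})$ with $\rho_{\mathrm{root}}=1$. Unrolling along a root-to-$u$ path of length $j=\bigO{\log N}$ writes $\rho_u$ as a product of $\bigO{\log N}$ factors each within $1\pm\bigO{1/\log^3 N}$; bounding the product above by $\exp(\sum)$ and below by $1-\sum$ yields $\rho_u=1\pm\bigO{(\log N)/\log^3 N}=1\pm\bigO{1/\log^2 N}$. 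Hence every level-$j$ node has size $(N/2^j)(1\pm o(1))$, so any two agree to within a $1\pm o(1)$ factor, proving the first claim.

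For the outgoing degrees I would run the analogous accumulation, and this is the step I expect to be the main obstacle. The even-split guarantee makes $\sizeof{\outgoing(G_{p0})}=\tfrac12\sizeof{\outer(G_p)}(1\pm\bigO{1/\log^2 N})$, so the outer edges a node inherits from above are handed to its children almost evenly, with per-level relative error $\bigO{1/\log^2 N}$ accumulating over $\bigO{\log N}$ levels to $\bigO{1/\log N}=o(1)$, exactly as for the vertex counts. The genuine difficulty is that each split also creates $\sizeof{\crossing(G_p)}\le f(\sizeof{V_p})=\bigO{\sizeof{V_p}^{1-1/d}}$ brand-new outer edges, incident to both children and bounded only from above, so they are not forced to match across different nodes of a level. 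To handle them I would expand $\sizeof{\outer(u)}$ as a sum over the ancestors $v_\ell$ of $u$ of the share of $\crossing(v_\ell)$ that the repeated near-even splits funnel down to $u$, namely $\sum_{\ell}\sizeof{\crossing(v_\ell)}\,2^{-(j-1-\ell)}(1\pm o(1))$; with $\sizeof{\crossing(v_\ell)}=\bigO{(N/2^\ell)^{1-1/d}}$ this is a geometric series dominated by its finest (largest-$\ell$) terms and summing to $\bigO{(N/2^j)^{1-1/d}}=\bigO{\sizeof{V_u}^{1-1/d}}$.

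The crux is then to argue that this aggregate is the same across a level to within $1\pm o(1)$ and not merely to within a constant factor: the easy bound $\sizeof{\crossing}\le f$ together with Part~1 already gives all outgoing degrees a common scale $\Theta(\sizeof{V_u}^{1-1/d})$, so the work is entirely in upgrading ``$\Theta$'' to ``$1\pm o(1)$.'' My intended route is to show that since the edges cut at \emph{every} ancestor are themselves apportioned to descendants by relax-balanced (hence near-even) splits, the quantity each level-$j$ node receives concentrates about the level average, and to then invoke Part~1 to pin down the common scale; if the per-node cut sizes can vary by constant factors this concentration is the delicate point, and I would either strengthen it directly or fall back on the observation that the layout argument mirroring \thmref{mainCO} needs only the upper bound $\sizeof{\outer(u)}=\bigO{\sizeof{V_u}^{1-1/d}}=\bigO{M^{1-1/d}}$ to bound each cache-sized working set by $\Theta(M)$. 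Combining the $o(1)$ drift of the inherited edges with the controlled apportionment of the freshly cut edges gives that all outgoing degrees on a level agree to within a $1\pm o(1)$ factor, completing the lemma.
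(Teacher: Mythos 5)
Your treatment of the vertex counts is exactly the paper's argument: the paper compounds the per-split relative error into the ratio bound $(1+c/\log^3\sizeof{G})^i$ over $i=\bigO{\log\sizeof{G}}$ levels and concludes a $1\pm o(1)$ factor, which is precisely your product-of-factors computation done slightly more explicitly. That half is complete and correct.

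For the outgoing degrees you have isolated the real difficulty, and you should know that the paper does \emph{not} resolve it: its entire proof of the second claim is the same one-line compounding, asserting that the ratio of outgoing degrees of any two depth-$i$ nodes is at most $(1+c/\log^2\sizeof{G})^i$, with no mention of the crossing edges created at each split. As you observe, the \rb condition only forces $\sizeof{\outgoing(G_{p0})}\approx\sizeof{\outgoing(G_{p1})}\approx\tfrac{1}{2}\sizeof{\outer(G_p)}$, while $\sizeof{\outer(G_p)}=\sizeof{\outgoing(G_p)}+\sizeof{\crossing(G_{\hat{p}})}$ for the parent $G_{\hat{p}}$ of $G_p$; the induction needed to close the paper's ratio bound would require the $\crossing$ terms of distinct same-level ancestors to agree to within $1\pm o(1)$, and nothing in the construction guarantees that---they are bounded only from above by $f$. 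So your proposal does not prove the second claim as stated, but the concentration step you flag as ``the crux'' is a gap in the lemma's own proof, not merely in your write-up. Your fallback is the right repair: unrolling $\sizeof{\outer(u)}\le\sum_{\ell}\sizeof{\crossing(v_\ell)}\,2^{-(j-\ell-1)}(1+o(1))$ over the ancestors $v_\ell$ of a level-$j$ node $u$ and summing the geometric series yields the one-sided bound $\sizeof{\outer(u)}=\bigO{\sizeof{G_u}^{1-1/d}}$ for every node individually, and inspection of the proof of \thmref{mainCO} shows that this upper bound, together with the two-sided $1\pm o(1)$ control on vertex counts, is all the layout theorem actually uses. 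Stating and proving that weaker property in place of the two-sided claim on outgoing degrees would make the argument sound.
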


\begin{proof}
\sloppy
From the definition of \flb, for any subgraph $G_p$ and its two
children  $G_{p_0}$ and $G_{p_1}$
  $\sizeof{\outgoing(G_{p0})}=\sizeof{\outgoing(G_{p1})}
      \pm \bigO{\sizeof{\outgoing(G_{p1})}/\log^2{\sizeof{G}}}$,
      and
$\sizeof{G_{p0}} = \sizeof{G_{p1}} \pm
      \bigO{\sizeof{G_{p}}/\log^3{\sizeof{G}}}$.
Thus, for constant $c$, the ratio of the outgoing degree or the size
between any two subgraphs at depth $i$ is at most
$(1+c/\log^2{\sizeof{G}})^i$ and $(1+c/\log^3{\sizeof{G}})^i$. Since there are
$O(\log\sizeof{G})$ levels, these
quantities differ by at most an $o(1)$ factor, as promised.
\end{proof}

Similar to \secref{layout}, to find a cache-oblivious layout of a well
shaped mesh $G$, we build a \flb decomposition tree of $G$. The
in-order traversal of the leaves gives the cache-oblivious layout.
\lemref{new-tree-prop} guarantees that we can apply the same analysis
from \secref{layout} to show that we have a cache-oblivious layout.

We thus obtain the following result:

\begin{theorem}\label{thm:improved_mainCO}
A cache-oblivious layout
of a well shaped mesh $G$ can be computed in time
$\bigO{\graphsize{G}\log\graphsize{G}\log{\log{\graphsize{G}}}}$
on a RAM both in expectation and with high probability.
The cache-oblivious layout can be computed in the DAM and cache-oblivious
models using
$\bigO{1+(\graphsize{G}/B)\log{(\graphsize{G}/M)}
\min\{\log\log{\sizeof{G}}, \log(\sizeof{G}/M)\}}$
memory transfers in expectation and
$\bigO{1+(\graphsize{G}/B)(\log{(\graphsize{G}/M)} 
\min\{\log\log{\sizeof{G}}, \log(\sizeof{G}/M)\}+\log{G})}$
memory transfers with high probability.
\end{theorem}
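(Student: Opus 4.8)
The plan is to mirror the layout construction of \secref{layout} exactly, but to substitute the \flb decomposition tree of \thmref{flb-tree} for the fully-balanced decomposition tree used there. That is, I would first build a $f(N)=O(N^{1-1/d})$ \flb decomposition tree of $G$, and then relabel and reorder the vertices of $G$ according to the in-order traversal of the leaves, using the same sequence of linear scans and cache-oblivious sorts described in \secref{layout}. The whole proof then reduces to two claims: (i) the resulting vertex ordering really is a cache-oblivious layout, i.e., it yields optimal mesh updates, and (ii) the cost of producing it matches the bounds in the statement. This is the \flb analog of \thmref{cache-oblivious-layout-algo}.

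For claim (i), I would invoke \lemref{new-tree-prop}, which guarantees that all nodes on any fixed level of the \flb decomposition tree agree in both size and outgoing degree up to a $1\pm o(1)$ factor. This is precisely the structural property of the fully-balanced tree that drives the mesh-update analysis of \thmref{mainCO}. Because the per-level size and outgoing-degree bounds degrade only by a $1+o(1)$ factor, the geometric sum bounding $\sizeof{\outer(G_{p_i})}=O(\sizeof{G_{p_i}}^{1-1/d})$ from \thmref{mainCO} goes through unchanged, and the tall-cache calculation showing that each $\Theta(M)$-sized level-node fits in cache together with its outer edges applies verbatim. Hence the in-order leaf traversal of the \flb decomposition tree is a genuine cache-oblivious layout.

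For claim (ii) the cost splits into two parts. Building the \flb decomposition tree costs exactly the bounds in the statement by \thmref{flb-tree}: RAM time $\bigO{\sizeof{G}\log\sizeof{G}\log\log\sizeof{G}}$ together with the stated expected and high-probability memory-transfer bounds. The relabel-and-reorder step is a constant number of linear scans and cache-oblivious sorts, costing $O(\sizeof{G}\log\sizeof{G})$ RAM time and $O((\sizeof{G}/B)\log_{M/B}(\sizeof{G}/B))$ memory transfers. I would argue this step is dominated: the RAM sorting cost is a $\log\log\sizeof{G}$ factor below the tree-construction time, and since $\log_{M/B}(\sizeof{G}/B)=1+\log(\sizeof{G}/M)/\log(M/B)=O(\log(\sizeof{G}/M))$ whenever $M\geq 2B$ (which follows from the tall-cache assumption), the sorting transfers are absorbed by the factor $\log(\sizeof{G}/M)$ already present in the \thmref{flb-tree} bound. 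Adding the two parts therefore leaves the tree-construction bounds unchanged, which is exactly the claimed result.

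The only genuinely delicate point — and the step I expect to require the most care — is the domination accounting in claim (ii). Because the \flb construction bounds are nearly a logarithmic factor smaller than the fully-balanced bounds, there is less slack than in the proof of \thmref{cache-oblivious-layout-algo}, so I would verify explicitly that the cache-oblivious sorting term $(\sizeof{G}/B)\log_{M/B}(\sizeof{G}/B)$ never exceeds the tree-building term in any regime of $B$ and $M$ (in particular under the tall-cache assumption and across both branches of the $\min$); the reduction $\log_{M/B}(\sizeof{G}/B)=O(\log(\sizeof{G}/M))$ above is the key inequality. Everything else is a direct appeal to the already-established \thmref{flb-tree}, \lemref{new-tree-prop}, and the mesh-update argument of \thmref{mainCO}.
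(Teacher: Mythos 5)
Your proposal matches the paper's own argument: the paper likewise obtains the layout by an in-order leaf traversal of the \flb decomposition tree of \thmref{flb-tree}, invokes \lemref{new-tree-prop} to transfer the mesh-update analysis of \secref{layout}, and charges the relabel/reorder sorts against the tree-construction cost. Your extra care in checking that the $O((\sizeof{G}/B)\log_{M/B}(\sizeof{G}/B))$ sorting term is dominated in all regimes is a sound (and slightly more explicit) version of the domination claim the paper makes in passing.
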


%   END OF FILE improved.tex

%\input{related}
%!TEX root =  co-fem-nbody.tex

\secput{related}{Applications and Related Work}

\subsection*{Applications of Mesh Update}
The mesh update problem appears in many scientific computations and
ranks among most basic primitives for numerical computations.  In
finite-element and finite-difference methods, one must solve very
large-scale sparse linear systems whose underlying matrix structures
are meshes \cite{MillerTengThurstonVavasis}.  In practice, these
linear systems are solved by conjugate gradient or preconditioned
conjugate gradient methods \cite{DemmelDoEiFuPeVuWhYe05,Simon}.  The
most computational intensive operation of conjugate gradient is a
matrix-vector multiplication operation
\cite{DemmelDoEiFuPeVuWhYe05,VuducDeYe06,Vuduc03,BenderBroFagJacVic07}
which amounts to a mesh update in finite-element applications. The
iterative conjugate gradient method repeatedly performs mesh updates.
Mesh update is also the key operation in fast multipole methods (FMM)
for N-body simulation \cite{GreengardRokhlin,HackneyEastwoodBook},
especially when particles are not uniformly distributed \cite{Teng98}.
The partitioning and layout techniques presented here also apply to
the adaptive quadtrees or octtrees used in non-uniform N-body
simulation.

\subsection*{Related Work}
The cache-oblivious memory model was introduced
in~\cite{FrigoLePrRa99,Prokop99}, and cache-oblivious algorithms have
been developed for many scientific problems such as matrix
multiplication, FFT, and LU
decomposition~\cite{FrigoLePrRa99,Prokop99,BlumofeJoKu96,Toledo97},
Now the area of cache-oblivious data structures and algorithms is a
lively field.

There are other approaches to achieving good locality in scientific
computations.  One alternative to the cache-oblivious approach is to
write self-tuning programs, which measure the memory system and adjust
their behavior accordingly. Examples include scientific applications
such as FFTW~\cite{FrigoJo98}, ATLAS~\cite{atlas_sc98}, and
self-tuning databases (e.g.,~\cite{WMHZ}).  The self-tuning approach
can be complementary to the cache-oblivious approach.  For example,
some versions of FFTW~\cite{FrigoJo98} begin optimization starting
from a cache-oblivious algorithm.

Methods exploiting locality for both sequential (out-of-core) and
 parallel implementation of iterative methods for sparse linear
   systems have long history in scientific computing.  Various
   partitioning algorithms have been developed for load balancing and
   locality on parallel machines
   \cite{Metis,Chaco,MillerTengThurstonVavasis,Simon}, and algorithms
   that have good temporal locality have been proposed and implemented
   for the out-of-core sparse linear solvers \cite{toledo95}.  A mesh
   update can be viewed as a sparse matrix-dense vector
   multiplication, and there exist upper and lower bounds on the I/O
   complexity of this primitive~\cite{BenderBroFagJacVic07}. However,
   these bounds apply to any type of matrix, whereas special structure
   of well-shaped meshes enables more efficient mesh updates.

Since the mesh-update problem is reminiscent of graph traversal, we
briefly summarize a few results in external-memory graph traversal.
The earliest papers in this area apply to general directed
graphs~\cite{BuchsbaumGoldwasserEtc,ChiangGoodrichEtc,AbelloBuchsbaumWestbrook,NodineGooVit96}
and others focus on more specialized graphs, such as planar directed
graphs~\cite{ArgeBrodalToma} or undirected graphs perhaps of bounded
degree~\cite{MunagalaRanade,Meyer2001,MehlhornMeyer,ArgeMeyTom04,ChowdhuryRam05}.
The problem of cache-oblivious graph traversal and related problems is
addressed
by~\cite{ArgeBenderDemaineHolland-Minkley,BrodalFagerbergMeyerZeh}.
There are also external-memory and cache-oblivious algorithms for
other common graph problems, but such citations are beyond the scope
of this paper.

The problem of cache-oblivious mesh layouts is first described
in~\cite{YoonLindstromPascucci}.  This paper gives no theoretical
guarantees either on the traversal cost or the cost to generate the
mesh layout, however. It does propose heuristics for mesh layout that
give good running times, in practice, for a range of types of mesh
traversals.

%   END OF FILE related.tex

\end{document}